\documentclass[11pt]{amsart}

\usepackage[utf8]{inputenc}
\usepackage{amsmath}
\usepackage{amsthm}

\usepackage{lmodern}
\usepackage{anyfontsize}

\usepackage{bm}
\usepackage{bbold}
\usepackage{thm-restate} 

\usepackage{subfig}
\usepackage{graphicx}
\usepackage{float}
\usepackage{esdiff}
\usepackage{comment}
\usepackage{mathtools} 
\usepackage{xcolor} 
\usepackage{bbm} 
\usepackage{dsfont} 
\usepackage{diagbox}
\usepackage{pdfsync}

\usepackage{biblatex}
\usepackage{amsfonts}
\usepackage{amssymb}

\usepackage{url}
\usepackage{doi}
\usepackage{color}

\usepackage{tikz}
\usepackage{todonotes}

\usepackage{bm}
\usepackage{bbold}

\usepackage{subfig}
\usepackage{graphicx}
\usepackage{float}
\usepackage{esdiff}
\usepackage{comment}
\usepackage{mathtools} 
\usepackage{xcolor} 
\usepackage{bbm} 
\usepackage{dsfont} 
\usepackage{diagbox}
\usepackage{pdfsync}
\newtheorem{theorem}{Theorem}
\newtheorem{prop}{Proposition}
\newtheorem{cor}{Corollary}
\newtheorem{lem}{Lemma}

\theoremstyle{definition}
\newtheorem{dfn}{Definition}

\newtheorem{claim}{Claim}
\newtheorem{rem}{Remark}

\newcommand{\R}{\mathbb{R}}

\newcommand{\PP}{\mathbb{P}}
\newcommand{\E}{\mathbb{E}}

\usepackage{amsfonts}
\usepackage{amssymb}

\title{Markovian protocols and an upper bound on the extension complexity of the matching polytope}

\title{Markovian protocols and an upper bound on the extension complexity of the matching polytope}
\author[M. Szusterman]{M. Szusterman}

\address{Centre de Mathématiques Laurent Schwartz (CMLS) \\
École Polytechnique, 91128 Palaiseau Cedex, France}
\email{maud.szusterman@polytechnique.edu}

\begin{document}

\begin{abstract}
This paper investigates the extension complexity of polytopes by exploiting the correspondence between non-negative factorizations of slack matrices and randomized communication protocols. We introduce a geometric characterization of extension complexity based on the width of Markovian protocols, as a variant of the framework introduced by Faenza et al. This enables us to derive a new upper bound of $\tilde{O}(n^3\cdot 1.5^n)$ for the extension complexity of the matching polytope $P_{\text{match}}(n)$, improving upon the standard $2^n$-bound given by Edmonds' description. Additionally, we recover Goemans' compact formulation for the permutahedron using a one-round protocol based on sorting networks.
\end{abstract}

\maketitle


\section{Introduction}

An extended formulation of a polytope $P$ represents it as the linear projection of a higher-dimensional polytope $Q$. These objects are central to convex optimization and polyhedral combinatorics : a convex optimization problem over $P$ can be lifted to $Q$, which is advantageous if $Q$ admits a significantly more succinct description than $P$ (in the latter case, $Q$ is called a \emph{compact formulation} of $P$). Given a description $P=\{x\in \R^d : Ax\leq b , Cx=e\}$ of $P$, with $A\in \R^{m\times d}$, $C\in \R^{m'\times d}$, $b\in \R^m$, $e\in \R^{m'}$, we call $m$ the size of the description. The size of $P$, denoted $|P|$ in this work, is by definition the minimal size of such a description. Geometrically, $|P|$ is the number of facets of the polytope. The \emph{extension complexity} of $P$ is, by definition, the least size of an affine extension $Q$ of $P$ :
\begin{equation}
\label{dfnxcP}
\text{xc}(P):=\inf \{|Q| : Q \text{ is an affine extension of }P\}
\end{equation}
that is, the minimal number of inequalities needed to describe $P$, if (extra variables and) affine sections and projections are allowed.
Given an extended formulation $(Q,\pi)$ of $P$, and a convex function $f$ defined on $P$, a way to find $x^*\in\text{argmin}(f(x), x\in P)$ is to first find some $z^*\in \text{argmin}(\overline{f}(z), z\in Q)$ optimal for the (lifted)\footnote{(if for instance we try to minimize $f(x)=\langle c,x\rangle$ over $P$, then ``lifting'' the problem to $Q$ is just saying:
$\min 
\{\langle c,x\rangle , x\in P\}=\min 
\{\langle c,\pi z\rangle , z\in Q\}=\min 
\{\langle \pi^*c, z\rangle , z\in Q\} \hspace{2mm}).$ } problem, and then project $z^*$ onto $P$ ; since some optimization algorithms (such as the interior point method) have running time depending on $m$, the number of affine constraints, if $|Q|<<|P|$ then it is much faster to look for $z^*$ (and project) rather than for $x^*$ directly.



For instance, consider the permutahedron $\text{Perm}(n)$, the convex hull of the $n!$ vectors $x_{\sigma}=(\sigma(1),\sigma(2),...,\sigma(n))\in \R^n$, $\sigma\in \mathcal{S}_n$. While this polytope has dimension $n-1$, it has $2^n-2$ facets, which are given by Edmonds description :
\begin{equation}
\label{edmondsreppermu}
\text{Perm}(n)= \left\{ x\in \R^n : x([n])=\frac{n(n+1)}{2} ; x(J) \geq \frac{|J|(|J|+1)}{2} \quad  (\forall J \subsetneq [n], J\neq \emptyset) \right\} \quad,
\end{equation}
(one can check that none of these inequalities could be removed).

Now, consider the ($n^{th}$) Birkhoff polytope : it is defined as the set of doubly stochastic matrices, and, denoting $M_{\sigma}$ the permutation matrices :
\begin{align*}
\text{Birk}(n) &=\{M\in [0,1]^{n\times n} : \forall i\in [n], \sum_j M_{i,j}=\sum_j M_{j,i}=1\} =\text{conv}\{M_{\sigma} | \sigma\in\mathcal{S}_n\}.
\end{align*}
Equality of these two representations is the Birkhoff-von-Neumann theorem.

Let us stare for a minute at these two polytopes, and let us call $P=\text{Perm}(n)$ and $Q=\text{Birk}(n)$. Their dimensions differ quadratically : $\text{dim}(P)=n-1$ and $\text{dim}(Q)=(n-1)^2$, but their sizes differ exponentially : $|P|=2^n-2$ and $|Q|=n^2$. There is a clear bijection between their sets of vertices, and this bijection is in fact induced by the linear projection $\pi: \mathbb{R}^{n \times n} \to \mathbb{R}^n$ defined by $\pi(M) = M \cdot (1, 2, \dots, n)^\top$: one easily checks that $\pi \text{Birk}(n)=\text{Perm}(n)$.
Hence $\text{Perm}(n)$ is an example of polytope admitting an extension whose descriptive size is exponentially smaller than that of $P$.

Given a polytope $P\subset \R^d$, for which we have the two dual representations $P=\text{conv}(x_1, \ldots , x_N)=\{x\in H / Ax\leq b\}$ with $H$ some affine subspace of $\R^d$ and $(A,b)\in \R^{m\times d} \times \R^{m}$, one defines $S$, the\footnote{canonically, both $N$ and $m$ will be minimal, corresponding respectively to the number of vertices of $P$, and its number of facets, but extra points $x_j$, and extra constraints $(a_i,b_i)$ (corresponding to lower-dimensional faces $F$ of $P$) may be included.}  slack matrix of $P$, by $S_{i,j}=b_i-\langle a_i, x_j\rangle$, where $a_i\in \R^d$ is the $i^{th}$ row of $A$. In particular $S_{i,j}\geq 0$ for all $(i,j)\in [m]\times [N]$. Therefore one can define the non-negative rank of this matrix:
$$\text{rk}_+(S):=\inf\{r\geq 0 : \exists U\in \R_{\geq 0}^{m\times r}, \exists V\in \R_{\geq 0}^{r\times N}, S=UV\} \quad.$$
Farkas lemma ensures that $\text{rk}_+(S)$ doesn't depend on which slack matrix of $P$ was chosen, hence one can denote $\text{rk}_+(P):=\text{rk}_+(S)$. 


A seminal result in the field is the following characterization of extension complexity.
\begin{theorem}[Yannakakis]
\label{yannak}
Let $P\subset \R^d$ be a polytope. Then \emph{xc}$(P)=\text{rk}_+(P).$
\end{theorem}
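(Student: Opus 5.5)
The plan is to prove the two inequalities separately, working throughout with a fixed slack matrix $S$. Take $S$ with rows indexed by the facet inequalities $(a_i,b_i)$, $i\in[m]$, of a description $P=\{x\in H : Ax\leq b\}$, and columns indexed by the vertices $x_1,\dots,x_N$. By the Farkas remark above, $\text{rk}_+(S)$ does not depend on this choice.

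\emph{Step 1: $\text{rk}_+(P)\leq \text{xc}(P)$.} Let $Q=\{z : Ez\leq g,\ Cz=e\}$ be an extension with $r=|Q|$ inequalities and $\pi$ an affine map with $\pi(Q)=P$. For each vertex $x_j$, choose a preimage $z_j\in Q$ and set $V_{k,j}:=g_k-\langle E_k,z_j\rangle\geq 0$. For each facet inequality $\langle a_i,x\rangle\leq b_i$, the affine function $z\mapsto b_i-\langle a_i,\pi(z)\rangle$ is nonnegative on $Q$. By the affine Farkas lemma, which applies since $Q\neq\emptyset$, we can write it as $\sum_k u_{i,k}(g_k-\langle E_k,z\rangle)+\langle \mu_i, Cz-e\rangle+c_i$ with $u_{i,k}\geq 0$ and $c_i\geq 0$. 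Evaluating at $z_j$ gives $S=UV+c\mathbf{1}^\top$, so $S$ has nonnegative rank at most $r+1$. To remove the extra $+1$, we either absorb the constant column into an existing column, or note that $c\mathbf 1^\top$ can be dropped once we observe that each facet contains a vertex. The cleanest route is to pick a vertex $x_{j_0}$ on facet $i$: then $S_{i,j_0}=0$, which forces $c_i=0$ and each product $u_{i,k}V_{k,j_0}=0$. So $c\equiv 0$ and $S=UV$ with inner dimension $r$.

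\emph{Step 2: $\text{xc}(P)\leq \text{rk}_+(P)$.} Given a factorization $S=UV$ with $U\in\R_{\geq0}^{m\times r}$ and $V\in\R_{\geq0}^{r\times N}$, define
$$Q:=\{(x,y)\in H\times\R^r : Ax+Uy=b,\ y\geq 0\}.$$
This has at most $r$ inequalities, and we take $\pi$ to be the projection onto $x$. The inclusion $\pi(Q)\subseteq P$ holds because $Ax=b-Uy\leq b$ when $U\geq 0$ and $y\geq0$. For the reverse inclusion, each vertex $x_j$ lifts to $(x_j,V_{\cdot,j})\in Q$, since $Ax_j+UV_{\cdot,j}=Ax_j+S_{\cdot,j}=b$. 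Convexity of $Q$ then gives $P=\text{conv}(x_j)\subseteq\pi(Q)$.

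\emph{Main obstacle.} The one remaining issue is that $Q$ must be a polytope, i.e.\ bounded. In $y$-directions, a recession direction $(0,w)$ with $w\geq0$ and $Uw=0$ could exist. I would handle this by deleting the columns of $U$ (and the corresponding rows of $V$) that are identically zero. For any remaining $k$ there is an $i$ with $U_{i,k}>0$, so $Uw=0$ with $w\geq 0$ forces $w=0$. Boundedness in $x$ follows since $x\in P$. This only decreases $r$, so the inequality is preserved, and combining the two steps gives $\text{xc}(P)=\text{rk}_+(P)$.
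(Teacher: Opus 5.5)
Your proof is correct and is the standard argument for Yannakakis' theorem: Step 1 (affine Farkas lemma, with the constant term $c_i$ forced to vanish by a vertex lying on facet $i$) gives $\text{rk}_+(S)\leq \text{xc}(P)$. Step 2 is exactly the lift $\{(x,y) : Dx-Ay=b,\ Cx=\lambda,\ y\geq 0\}$ that the paper recalls from Yannakakis when comparing with Goemans' extension. The paper does not prove this theorem itself, it only cites it, and your deletion of the zero columns of $U$ to make the lift bounded is a correct detail that the paper's recalled formula leaves implicit.
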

This allows for example to get lower bounds on $\text{xc}(P)$ by finding lower bounds on $\text{cov}(S)$, the covering number of slack matrix $S$. This was used for instance in \cite{fiorinibirkhoff} to show that $\text{xc}(B_n)=|B_n|=n^2$, if $B_n=\text{Birk}(n)$ (by constructing an adequate fooling set of size $n^2$ for the slack matrix of $B_n$), or by Kaibel and Weltge  to reprove that $\text{CORR}(n)$ has extension complexity at least $1.5^n$ (using Razborov's lemma, \cite{kaibelw2015}).   In the other direction, upper bounding $\text{rk}_+(S)$, say by finding a compact (non-negative) factorization of the slack matrix, translates into an upper bound on $\text{xc}(P)$. By exploiting a correspondence between certain \emph{randomized protocols}, which, given any input $(i,j)\in [m]\times [N]$, produce a non-negative output $w$, whose average value matches the target entry:  $\E_{A,B}(w| A\leftarrow i, B\leftarrow j)=S_{i,j}$, and non-negative factorizations of a slack matrix $S$,  Faenza, Fiorini, Grappe and Tiwary have proven yet another characterization of extension complexity:
\begin{theorem}[FFGT -\cite{faenzaetal}]
\label{faenzacharacterization}
For any polytope $P$:
\[\lceil\log_2(\text{\emph{xc}}(P)\rceil=\min\{\text{cost}(\pi) : \pi \text{ (non-neg.) protocol computing }S, \text{ on average}\}.\]
\end{theorem}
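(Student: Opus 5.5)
By Theorem~\ref{yannak} it suffices to show that $\lceil\log_2 \text{rk}_+(S)\rceil$ equals the minimal cost of a non-negative protocol computing $S$ on average. Here $S\in\R_{\geq0}^{m\times N}$ is a slack matrix of $P$. Throughout, I take the FFGT model: Alice holds a row $i$ and Bob a column $j$, and they exchange bits using private randomness. At each leaf $\ell$ of the protocol tree the player who speaks last outputs a non-negative value $w$. The cost is the depth of the tree, i.e.\ the number of bits exchanged, and the protocol must satisfy $\E(w\mid i,j)=S_{i,j}$. I will prove the two inequalities separately.

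\emph{Factorization $\Rightarrow$ protocol.} Let $S=UV$ with $U,V\geq 0$ and inner dimension $r=\text{rk}_+(S)$. First discard zero columns of $U$ and zero rows of $V$, and rescale so that each row $V_{k,\cdot}$ of $V$ has entries in $[0,1]$; absorb the scaling into $U$. For Alice's row $i$, set $s_i=\sum_k U_{i,k}$. If $s_i=0$ then row $i$ of $S$ is zero, so Alice may send any index and Bob outputs $0$. Otherwise Alice samples $k$ with probability $U_{i,k}/s_i$ and sends it, which costs $\lceil\log_2 r\rceil$ bits. Bob then outputs $w=s_i\,V_{k,j}$. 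This requires Bob to know $s_i$, and it is the step I expect to be the main obstacle: sending $s_i$ would cost extra bits.

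There are two ways to avoid this. One is to use the convention in FFGT that the output at a leaf may be a product $a(i,\ell)\,b(j,\ell)$, with one factor computed by each player; then $w=s_i\cdot V_{k,j}$ is allowed and no extra communication is needed. The other is to first normalize so that $s_i$ is constant in $i$. This can be done by adding a slack column or rescaling rows of $U$, and then correcting via non-negative combinations without increasing $r$. With either fix, the expectation is
\[
\sum_k \frac{U_{i,k}}{s_i}\, s_i V_{k,j}=S_{i,j},
\]
so this direction gives protocol cost at most $\lceil\log_2 r\rceil$.

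\emph{Protocol $\Rightarrow$ factorization.} Fix a protocol of cost $c$, so its tree has at most $2^c$ leaves. For a leaf $\ell$, the probability of reaching $\ell$ on input $(i,j)$ factors as $p_A(i,\ell)\,p_B(j,\ell)$. This holds because each message depends only on the speaker's input, the transcript so far, and that speaker's private coins, so the transition probabilities along the path to $\ell$ split into Alice's factors and Bob's factors. The expected output conditioned on reaching $\ell$ is non-negative and, in the FFGT output convention, factors as $a(i,\ell)\,b(j,\ell)$. Therefore
\[
S_{i,j}=\sum_\ell \bigl(p_A(i,\ell)a(i,\ell)\bigr)\bigl(p_B(j,\ell)b(j,\ell)\bigr),
\]
which is a non-negative factorization with at most $2^c$ terms. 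Hence $\text{rk}_+(S)\leq 2^c$, and since $c$ is an integer, $c\geq\lceil\log_2\text{rk}_+(S)\rceil$.

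Combining the two directions with Theorem~\ref{yannak} gives the claimed equality. The only delicate point is aligning the output convention with the scaling issue in the first direction, and I would resolve it by adopting the product-of-local-factors leaf output from FFGT.
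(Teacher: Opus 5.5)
Your protocol-to-factorization half is correct. It is the same computation the paper does for its width analogue (first half of the proof of Proposition~\ref{faenzarestatedbis}): leaf by leaf, the expected claim splits into an Alice factor times a Bob factor. The gap is in the other half, exactly at the step you flagged, and neither of your fixes closes it within the paper's model.

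In that model (Definition~\ref{dfnproto}, and the FFGT protocols the paper quotes), the output is claimed by one party from its own input and the messages it has seen, e.g.\ $\omega_{B,y}(u_k)$. A leaf output $a(i,\ell)\,b(j,\ell)$ is not of that form, so fix (a) proves the equality for a more permissive notion of protocol, not the one in the statement.

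Fix (b) is false as stated. Rescaling rows of $U$ changes $S$, and a slack column raises the inner dimension to $r+1$. A size-$r$ factorization whose left factor has constant row sums need not exist. For the $3\times 2$ matrix with rows $(1,0),(0,1),(2,2)$, the right factor is forced to be $\text{diag}(\alpha,\beta)$ up to order, which gives row sums $1/\alpha$, $1/\beta$, $2/\alpha+2/\beta$.

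More to the point, your argument never uses that $S$ is a slack matrix, and for a general nonnegative matrix the exact equality fails with a single claimer. The matrix with rows $(1,0,1),(0,1,1),(2,2,4)$ has $\text{rk}_+=2$. Yet a one-bit protocol would force its rows (Alice sends, Bob claims) or its columns (Bob sends, Alice claims) to lie on a segment, and they do not. If the sender also claims, the output ignores the other input.

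The missing idea is to use that $P$ is bounded. The projections of the normals $a_i$ onto the direction space of $H$ positively span it, since otherwise $P$ would contain a ray. Hence there is $\lambda\in\R^m$ with all $\lambda_i>0$ and $\sum_i\lambda_i a_i$ orthogonal to that space. Then $\lambda^\top S_{\cdot,j}=\lambda^\top b-\langle\sum_i\lambda_i a_i,x_j\rangle=c$ is the same for every column $j$, with $c>0$ unless $S=0$.

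Given $S=UV$ of size $r$ with no zero column in $U$, rescale column $k$ of $U$ so that $\lambda^\top U_{\cdot,k}=c$, compensating in row $k$ of $V$. Then $c=\lambda^\top S_{\cdot,j}=c\sum_k V_{k,j}$, so every column of $V$ is a probability vector. Now let Bob, who holds the vertex, speak first. He sends $k$ with probability $V_{k,j}$ using $\lceil\log_2 r\rceil$ bits, and Alice claims $U_{i,k}\geq 0$, whose expectation is $S_{i,j}$. This is a genuine single-claimer protocol.

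By contrast, the paper's factorization-to-protocol step (for width) keeps Alice as first speaker and rescales so that $\max_i\sum_k A_{i,k}=1$. With the leftover probability she sends a dummy index $0$, with claim $0$. The dummy is not counted in $|\Gamma(\pi)|$, which is why the width version (Theorem~\ref{maudversionfaenzathm}) is exact. Measured in bits, however, it costs $\lceil\log_2(r+1)\rceil$, one bit too many when $r$ is a power of $2$.
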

A usual representation of a protocol is via binary trees (the same tree being used for all entries), then the cost (number of bits exchanged between the two parties, in the worst-case) is the height of the tree supporting the protocol. This explains the need for a $\log_2$ in the above statement\ref{faenzacharacterization}. In this work, we choose a different representation of (randomized) protocols, which is most relevant for \emph{markovian} protocols (see paragraphe \ref{para:rdmprotocols} for a  definition), and whose structure underlines the number of rounds of the protocol (which will be the number of layers of the BP). This choice enables a rewriting of Theorem \ref{faenzacharacterization} as follows:
\begin{theorem}
\label{maudversionfaenzathm}
Let $P\subset \R^d$ be a polytope, and $S$ a slack matrix of $P$.  Then 
\[\text{\emph{xc}}(P)=\min\{|\Gamma(\pi)| : \pi \text{\emph{ markovian }protocol computing } S \text{ on average}\}\]
where $\Gamma(\pi)$ is a certain geometric structure carrying the protocol.
\end{theorem}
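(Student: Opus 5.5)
The plan is to prove the two inequalities separately, using Theorem \ref{yannak}: by it, it suffices to show $\text{rk}_+(S)=\min|\Gamma(\pi)|$. I will take a markovian protocol to be a layered branching structure $\Gamma(\pi)$. Its vertices are states, organized in rounds $0,1,\dots,T$. Alice, who holds the row index $i$, and Bob, who holds the column index $j$, alternately move from the current state to a state of the next layer. Alice moves with transition probabilities $p_i(u\to v)\ge 0$ that depend only on $i$ and the current state $u$, and Bob moves likewise with $q_j(u\to v)$. Each final state $v$ carries a non-negative output $w_v(i,j)$ of product form, say $\alpha_v(i)\beta_v(j)$ (or a constant). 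I take $|\Gamma(\pi)|$ to be the number of states in the widest layer (or the number of final states), and I would adjust this to whatever convention the paper fixes for $\Gamma$.

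\textbf{Upper bound on $\min|\Gamma|$.} Given a factorization $S=UV$ with $U\in\R_{\ge0}^{m\times r}$ and $V\in\R_{\ge 0}^{r\times N}$ and $r=\text{rk}_+(S)$, I build a one-round protocol.
\begin{itemize}
\item First normalize the factorization. Rescale the columns of $U$ and the rows of $V$ so that every row of $U$ has positive sum $\lambda_i=\sum_k U_{i,k}$. Zero rows of $S$ can be handled separately by an output $0$.
\item Alice picks $k\in[r]$ with probability $U_{i,k}/\lambda_i$ and sends it. This gives a single layer of $r$ states.
\item The output at state $k$ is $w_k=\lambda_i V_{k,j}\ge 0$.
\end{itemize}
The expected output is then $\sum_k \frac{U_{i,k}}{\lambda_i}\lambda_i V_{k,j}=S_{i,j}$, so $|\Gamma(\pi)|\le r$.

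\textbf{Lower bound on $\min|\Gamma|$.} Conversely, take a markovian protocol computing $S$ on average. Expand the expectation as a sum over paths in the layered graph. By the Markov property, the probability of a path ending at a final state $v$ factors as a product of Alice's transition probabilities, which depend only on $i$, and Bob's, which depend only on $j$. Summing over paths from the root to a fixed state of a chosen layer $L$ gives
\[
S_{i,j}=\sum_{v\in L} \Pr(\text{reach } v\mid i,j)\cdot \E(w\mid v,i,j).
\]
The key step is to show that each of these two factors is itself a sum of products $a(i)b(j)$. The matrices $\big(\Pr(\text{reach }v\mid i,j)\big)_{i,j}$ are products of non-negative transition matrices alternating between Alice and Bob.

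This is the main obstacle. A naive expansion over paths gives a number of terms equal to the number of paths, not $|L|$. To fix this, I would argue by induction on the layers that the number of terms collapses to the width. I would fix the layer immediately after the last change of speaker, or take $L$ to be the last layer, where all the history is summarized in the current state. Then conditioning on the state $v$ makes the future (Bob's remaining moves and the output) depend on $j$ and $v$ only, and the past (Alice's moves) depend on $i$ and $v$ only. If the protocol has several rounds, I expect to have to use the structure of $\Gamma$ as the paper defines it (likely the number of edges or of states across all layers) so that each state gives one rank-one non-negative term. This yields $\text{rk}_+(S)\le|\Gamma(\pi)|$.

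Combining the two inequalities with Theorem \ref{yannak} gives the statement. The independence of the choice of slack matrix $S$ follows from the remark after the definition of $\text{rk}_+$.
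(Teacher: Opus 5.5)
Your overall plan matches the paper's: reduce to $\text{rk}_+(S)$ via Yannakakis, build a one-round protocol from a factorization, and expand the expected output over message paths for the converse. The converse, however, cannot be completed as you set it up, because you chose the wrong quantity for $|\Gamma(\pi)|$.

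\textbf{The converse direction.} If $|\Gamma(\pi)|$ is the widest layer, the number of final states, or even the total number of states or edges, then $\text{rk}_+(S)\le|\Gamma(\pi)|$ is false. So no induction can make the path sum ``collapse to the width.'' Take the simplex with $N=2^m$ vertices: its slack matrix is $I_N$ and its extension complexity is $N$. Run $2m$ rounds with odd layers $\{0,1,\bot\}$ and even layers $\{\top,\bot\}$. Alice first sends $x_1$; afterwards she sends $x_{r}$ after receiving $\top$, and $\bot$ after receiving $\bot$. Bob answers $\top$ iff the bit he received equals $y_r$, and $\bot$ is absorbing. Alice finally claims $1$ on $\top$ and $0$ on $\bot$. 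This is a correct markovian protocol with layers of size at most $3$ and $O(m)$ states and edges.

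The decoupling you hoped for fails as soon as the speaker changes more than once. Conditioning on a state $v$ of an intermediate layer, both $(\Pr(\text{reach }v\mid i,j))_{i,j}$ and the conditional future output still depend on both inputs, so the term attached to $v$ is not rank one. The paper instead defines $\Gamma(\pi)\subset V_1\times\cdots\times V_k$ as the set of \emph{full message paths} $\gamma$ with $A_{x,\gamma}B_{\gamma,y}>0$ for some $(x,y)$. Here $A_{x,\gamma}$ is the product of Alice's probabilities along $\gamma$ (times her output if she claims), and $B_{\gamma,y}$ is the product of Bob's (times his output if he claims). With that definition, the ``naive'' path expansion you set aside is the whole argument: $S_{x,y}=\sum_{\gamma\in\Gamma(\pi)}A_{x,\gamma}B_{\gamma,y}$ is a non-negative factorization of size $|\Gamma(\pi)|$. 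In the example above there are exactly $2^m$ such paths, as there must be.

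\textbf{The upper bound.} There is also a smaller modelling problem here. In the paper's model the output is claimed by the party receiving the last message, as a function of its own input and that message only (here $\omega_{B,j}(k)$). So an output $\lambda_iV_{k,j}$ depending on both inputs is not allowed. Likewise Bob cannot ``output $0$ on zero rows,'' since he does not know $i$. The paper's construction avoids both issues:
\begin{itemize}
\item multiply $U$ by a constant and $V$ by its inverse so that $\max_i\sum_kU_{i,k}=1$;
\item let Alice send $k\in[r]$ with probability $U_{i,k}$, and a dummy message $0$ with probability $1-\sum_kU_{i,k}$;
\item let Bob claim $V_{k,j}$ on $k$, and $0$ on the dummy message.
\end{itemize}
The dummy path never has positive output, so it is not in $\Gamma(\pi)$, and $|\Gamma(\pi)|\le r$.
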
 
Informally, $\Gamma(\pi)$ is a certain subset of paths, defined using the BP supporting the protocol $\pi$, and according to the probability distributions used by the two parties for their private computations ran during the protocol. We shall call $|\Gamma(\pi)|$ the \emph{width}\footnote{this terminology is borrowed from branching program literature} of the protocol. See page 5.

Using this, and revisiting some combinatorial arguments developed in \cite{faenzaetal} which allowed them to derive $\text{xc}(P_{PM}(n))\leq n^3 2^{n/2}$ for the perfect matching polytope (if $n$ even), we can prove the following upper bound (see paragraph \ref{section:PMG} for the definition of $P_{\text{match}}(G)$, and for a dual characterization of this polytope).
\begin{theorem}
\label{mainresult}
Let $G=(V,E)$ be a graph on $|V|=n$ vertices. Denote $P_{\text{match}}(G)\subset \R^{E}$ the matching polytope of $G$. Then $\text{xc}(P_{\text{match}}(G))\leq \ln(n) \cdot n^3 \cdot 1.5^n$.
\end{theorem}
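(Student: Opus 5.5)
We plan to apply Theorem~\ref{maudversionfaenzathm}: we build a markovian protocol that computes on average a slack matrix of $P_{\text{match}}(G)$ and show that its width is at most $\ln(n)\, n^3\, 1.5^n$.

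For the slack matrix we use Edmonds' description. The rows are the nonnegativity constraints $x_e\ge 0$, the degree constraints $x(\delta(v))\le 1$, and the odd-set constraints $x(E(U))\le (|U|-1)/2$ for $U\subseteq V$ with $|U|$ odd and at least $3$. The columns are the matchings $M$. The first two families contribute only $O(n^2)$ rows, so they can be handled by trivial protocols (or directly as nonnegative rank-one pieces) and add only $O(n^2)$ to the width. The real work is the odd-set block, with entries
\[
S_{U,M}=\frac{|U|-1}{2}-|M\cap E(U)|.
\]
Combinatorially, $2S_{U,M}+1$ is the number of vertices of $U$ left exposed by $M\cap E(U)$. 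Equivalently, it counts the vertices of $U$ that are unmatched or matched to $V\setminus U$.

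The protocol, adapted from the perfect matching argument of \cite{faenzaetal}, runs as follows.
\begin{itemize}
\item[] Alice holds $U$ and Bob holds $M$.
\item[] Bob picks a random ordering or bipartition structure of the vertices.
\item[] Alice picks a uniformly random vertex $u\in U$ and accepts only if $u$ is not matched inside $U$.
\item[] The output is a weight, normalized by $|U|$ and by the probabilities, whose expectation equals $S_{U,M}$.
\end{itemize}
The non-markovian protocol of FFGT needs to transmit $U$ essentially in full (about $2^{n/2}$ possibilities after halving). In our markovian version, the width counts the distinct paths in the layered branching program. We organize the layers so that at step $k$ the state records only how $U$ and $M$ interact on a random block of vertices.

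The key counting step is the choice of a random partition of $V$ into blocks of size $3$ (or pairs combined with singletons). On each triple, the relevant state of $U$ relative to the matched pairs takes about $1.5^3$ effective values instead of $2^3$. The states live per-block and are chained markovianly, so the total width is a sum over layers of local state counts, giving $n\cdot 1.5^n$ up to polynomial factors. The factor $\ln n$ comes from a covering argument: we need $O(\ln n)$ random partitions (a coupon-collector or union bound) so that for every pair $(U,M)$ some partition is "good", for example one that separates the relevant exposed vertex correctly. The extra $n^2$ comes from guessing the exposed vertex and its partner, or from the degree rows.

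The main obstacle is to make the width count rigorous: showing that the set $\Gamma(\pi)$ of paths with positive probability has size $\tilde O(n^3 1.5^n)$, while the expected output remains exactly $S_{U,M}$ for every odd $U$ and every matching $M$, including matchings that are not perfect. Our first attempt will be to average the output over a uniformly random choice of the exposed vertex. We then verify unbiasedness through the identity that $2S_{U,M}+1$ counts exposed vertices, and subtract the constant $1/2$ via a deterministic row term.
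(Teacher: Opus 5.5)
There is a genuine gap: you never actually specify a protocol, and the mechanisms you do sketch cannot work. Your unbiasedness idea is to pick $u$ uniformly in $U$, learn whether $M(u)\in U$, and output $|U|\cdot\mathbf{1}_{M(u)\notin U}$, which has mean $2S_{U,M}+1$. Reaching $S_{U,M}$ from this requires subtracting $1/2$. A correct protocol, equivalently a non-negative factorization, cannot subtract anything: the output would be $-1/2$ whenever $u$ is matched inside $U$, and that happens with positive probability as soon as $M\cap E(U)\neq\emptyset$. In the tight case $S_{U,M}=0$, every positive-probability path must output exactly $0$, and that is the whole difficulty. If a fix of this kind worked, the $O(n^2)$ paths $(u,M(u))$ would give a polynomial-size extension, contradicting Rothvoss's $2^{\Omega(n)}$ lower bound.

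Your width count is also incoherent. $|\Gamma(\pi)|$ counts whole paths $(v_1,\dots,v_k)$, not states per layer. If it were a sum over layers of $O(1)$ local states on triples, you would get a polynomial bound. And ``$1.5^3$ values per triple'' yields $1.5^n$ only if multiplied over blocks. Neither the $\ln n$ (``$O(\ln n)$ random partitions'') nor the $n^2$ factor is attached to an actual construction. Note also that in FFGT it is not $U$ that gets transmitted: the exponential part of the communication is Bob's index into a covering family.

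Your ``bipartition structure'' is the right object, but it must be compatible with $M$, come from a small public family, and be used by Alice to sample only on the smaller side of $U$. For a matching $M$ with $|M|=k$, call a $k$-set $X\subseteq V$ compatible with $M$ if $M\subseteq\delta(X)$, i.e.\ $X$ contains exactly one endpoint of each edge of $M$. Each $M$ is compatible with exactly $2^k$ of the $\binom nk$ such sets. So the constant weight $2^{-k}$ is a fractional cover of the hypergraph ($k$-sets versus $k$-matchings). Lov\'asz's greedy bound then yields a public family $T_k$ of at most $(1+k\ln n)2^{-k}\binom nk$ sets such that every $k$-matching is compatible with some member. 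This is where both factors come from: $\sum_k 2^{-k}\binom nk=1.5^n$, and $1+k\ln n\le 1+\frac n2\ln n$.

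The protocol then runs as follows. Bob sends the index $(k,j)$ of a compatible $X_j\in T_k$. Alice takes $Z\in\{X_j,X_j^c\}$ with $|Z\cap U|<|U|/2$, hence $|Z\cap U|\le(|U|-1)/2$. If $Z\cap U=\emptyset$ she outputs $\frac{|U|-1}{2}$. Otherwise she sends a uniform $u\in Z\cap U$, Bob replies $M(u)$ (or a null symbol if $u$ is exposed), and Alice outputs $\frac{|U|-1}{2}-|Z\cap U|\cdot\mathbf{1}_{M(u)\in U}$.

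Non-negativity comes from choosing the smaller side $Z$. Unbiasedness holds because each edge of $M\cap E(U)$ has exactly one endpoint in $Z\cap U$, so $\PP(M(u)\in U)=|M\cap E(U)|/|Z\cap U|$. The paths $((k,j),u,u')$, together with the trivial rows, give width at most $|E|+|V|+n(n-1)\sum_k|T_k|$, which yields the claimed bound.
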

To the best of our knowledge, this is the first upper bound on $\text{xc}(P_{\text{match}}(K_n))$ which falls below $2^n$.
In fact, matching and perfect matching polytopes are intimately related: for any $G$, $P_{PM}(G)$ is a face of $P_{\text{match}}(G)$, as $P_{PM}(G)=H\cap P_{\text{match}}(G)$, with $H=\{x\in \R^E : x(E)=|V|/2\}$, but that allows to deduce upper bounds on $\text{xc}(P_{PM}(G))$ from these on $P_{\text{match}}(G)$, not the reverse. In the other direction, $P_{\text{match}}(G)$ can be seen as the projection of a face\footnote{(the face is the intersection of $P_{PM}(\overline{G})$ with $H=\{x\in \R^{\overline{E}} : u\neq v \Rightarrow x_{uv'}=0\}$, and the projection is simply $\pi : \R^{\overline{E}}=\R^{E} \times \R^{E'}\times \R^{V\times V'} \to \R^{E}$ which keeps the $E$-coordinates)} of $P_{PM}(\overline{G})$, where $\overline{G}=(\overline{V},\overline{E})$, with $\overline{V}:=V\cup V'$ ($V'$ a disjoint copy of $V$), and with $\overline{E}:=E \cup E' \cup (V\times V')$, for instance $\overline{G}=K_{2n}$ if $G=K_n$. This implies that $\text{xc}(P_{\text{match}}(G))\leq \text{xc}(P_{PM}(\overline{G}))$; in particular $\text{xc}(P_{\text{match}}(n))\leq \text{xc}(P_{PM}(2n)) \leq (2n)^3 2^n$ (the latter is Faenza et al.'s upper bound \cite{faenzaetal}), but this isn't too helpful for $\text{xc}(P_{\text{match}}(n))$, since $|P_{\text{match}}(n)| \leq 2^{n-1}+\frac{n(n-1)}{2}$, as given by Edmonds description of $P_{\text{match}}(n)$ ( \cite{Edmonds65matchings}).

Furthermore, again using the concept of randomized protocols developed in \cite{faenzaetal}, we present a one-round protocol based on sorting networks which allows to recover Goemans’ upper bound for the extension complexity of the permutahedron. In fact the non-negative factorization arising from this protocol is the one suggested by Fiorini et al. in \cite{FRT12}. 
%
We detail this one-round protocol in Section \ref{section:permu}, and prove its correctness. Links with Goemans' direct construction $Q_n\subset \R^{n+2q}$ are discussed (\ref{para:comparisongoemans}). We leave some open questions regarding minimal sorting networks (and their resulting factorizations).
%

\vspace{3mm}
\emph{Remerciements}. I thank Balthazar Bauer and Hervé Fournier for valuable discussions.

\section{Markovian protocols with branching programs}
\label{para:rdmprotocols}

A Markovian protocol in $k$ rounds of communication is defined as follows.

Given $k$ disjoint sets $V_1, \ldots , V_k$, and sets of directed edges $\overrightarrow{E}_j \subset V_j\times V_{j+1}$, for $1\leq j\leq k-1$, and $\overrightarrow{E}_0 =\{s\} \times V_1$ and $\overrightarrow{E}_f \subset V_k \times \{t\}$, one can use the branching program structure $\mathcal{A}=(\{s\},V_1, \ldots , V_k, \{t\}, \overrightarrow{E}_j, j\in [k])$, to produce random outputs. More precisely, given $\mathcal{X}$ and $\mathcal{Y}$ two sets, we assume that for each $x\in \mathcal{X}$, there is an initial probability distribution $p^0_{A,x}\in \mathcal{P}(V_1)$ (which tells Alice how to pick $u_1\in V_1$), and for $1\leq j\leq k-1$, transition probabilities $p^{(j)}_{D,z}(u_j, *)\in \mathcal{P}(V_{j+1})$, with $(D,z)=(A,x)$ if $j$ even (resp. $(D,z)=(B,y)$ if $j$ odd), which tells party $D$ how to pick $u_{j+1}\in V_{j+1}$, given his/her initial input $z$, and knowing the last recevied message from $B$ (resp. from $A$) is $u_j$. Finally, say $k$ is even, then party $A$, once receiving message $u_k$ (from $B$) claims a random output $\omega_{A}=\omega_{A,x}(u_k)$, chosen according to $p^{(k)}_{A,x,u_k}$, a probability distribution on $\mathcal{P}([0,\infty)$.\footnote{Similarly, if $k$ is odd, then $B$ shall claim $\omega_B:=\omega_{B,y}(u_k)$, which is random and drawn (by B) with respect to a certain distribution $p^{(k)}_{B,y,u_k}\in \mathcal{P}([0,\infty))$. )}

Thus $\mathcal{A}=(\{s\},V_1, \ldots , V_k, \{t\}, \overrightarrow{E}_j, j\in [k])$, together with the initial distribtion $p^0_{A,x}$, the transition probabilities $p^{(j)}_{D,z,u_j}\in \mathcal{P}(V_{j+1})$ (with $(D,z)\in \{(A,x),(B,y)\}$ depending on parity of $j$), and the random output $\omega_{D}$ claimed by A (or B) according to her received initial input, and to last received message $u_k$, defines a two-party communication protocol in $k$ rounds of communication.

Remark :  of course, one can also define protocols which start with B as the first speaker, i.e. with initial distributions $p^0_{B,y}\in \mathcal{P}(V_1)$, then transition probabilities $p^{(j)}_{D,z}(u_j,u_{j+1})$ with $(D,z)=(A,x)$ if $j$ odd, and$(D,z)=(B,y)$ if $j$ even, etc. Similarly, the definition which we give just below can also be stated for protocols with either A or B as a first speaker, and with either A or B as a final claimer. For brievity, we only state one of the four definitions.

\begin{dfn}
\label{dfnproto}
Given a two-party communication protocol supported by an BP $\mathcal{A}=(\{s\},V_1, \ldots , V_k, \{t\}, \overrightarrow{E}_j, j\in [k])$, say $A$ is the first to speak, and say she is also the (final) claimer. Then the average output of this protocol, when A is given x for input, and B is given y for input, is:
\[\E_{A,B}[\omega_A | A\leftarrow x, B\leftarrow y]=\sum_{u_1, \ldots , u_k} p^0_{A,x}(u_1)p^{(1)}_{B,y}(u_1,u_2) p^{(2)}_{A,x}(u_2,u_3) \cdots p^{(k-1)}_{B,y}(u_{k-1},u_k) \E_A[\omega_{A,x,u_k}] \]
where $\E_A[\omega_{A,x,u_k}]=\E_{\omega \sim p^{(k)}_{A,x,u_k}}[\omega_A]$. In case the choice of the output $\omega_A$ is deterministic given $u_k$, we may more simply write:
\[\E_{A,B}[\omega_A | A\leftarrow x, B\leftarrow y]=\sum_{u_1, \ldots , u_k} p^0_{A,x}(u_1)p^{(1)}_{B,y}(u_1,u_2) p^{(2)}_{A,x}(u_2,u_3) \cdots p^{(k-1)}_{B,y}(u_{k-1},u_k) \omega_{A,x}(u_k) \]
\end{dfn}
One can also define the average value of the ouput \emph{knowing} that the first shared messages have been $(u_1,\ldots , u_j)$ (for some $j<k$). Say $j$ is even, and $A$ speaks first, so the last message $u_j$ was sent by Bob, then (say $k$ is odd) this value is
\[\E_{A,B}[\omega_B | A\leftarrow x, u_j ; B\leftarrow y] = \sum_{u_{j+1}, \ldots , u_k} p^{(j)}_{A,x}(u_j,u_{j+1}) \cdots p^{(k-1)}_{A,x}(u_{k-1}, u_k) \E_B[\omega_B | B\leftarrow y, u_k]. \]

Note that the average output knowing $(u_1, \ldots , u_j)$ is the same as the average output if only knowing $u_j$ (this is why such protocols are called markovian).

Now protocols are usually designed to help the 2 parties find (or guess) the value $f(x,y)$, where $f: \mathcal{X} \times \mathcal{Y} \to \R$ is a given function, known by both parties, and where each party only knows part of the input. When shall a randomized protocol considered  helpful regarding a target matrix $f$ ? We follow \cite{faenzaetal}, and translate to our model their notion of correctness (for randomized protocols).


\begin{dfn}
\label{dfncorrect}
Let $\pi$ be a  protocol for A and B, aimed at guessing the value $f(x,y)$, when $A$ knows $x$ and $B$ knows $y$. Assume the protocol is a Markovian protocol in $k$ rounds, i.e. with its structure supported by a BP in $k$ layers, as described above. We shall say that $\pi$ is \emph{correct} if \footnote{in words, $\pi$ is correct when for any inputs $(x,y)$, the average value $\omega_A$ (or $\omega_B$) claimed by one or the other party at the end of the protocol, matches $f$ exactly} : 
\[\forall (x,y)\in \mathcal{X} \times \mathcal{Y}, \quad \E_{A,B}[\omega | A\leftarrow x, B\leftarrow y] =f(x,y). \]
\end{dfn}

We may have further requirements on the protocol, eg. asking that the outputs always remain within a given domain. For instance if  $f: \mathcal{X} \times \mathcal{Y} \to [0,\infty)$, we will say a protocol $\pi$ is correct, if given any inputs, the average value of the output matches $f$, and if $\PP_{A,B}(\omega \geq 0)=1$. 

To illustrate the above notations, let us quote an example from \cite{faenzaetal}. Given a (connected) graph $G=(V,E)$, denote $SpT(G)$ the set of its spanning trees $SpT(G)=\{T\subset E / |T|=|V|-1, T \text{ tree }\}$, and denote $P_{SpT}(G)$ the spanning tree polytope of $G$. If $|V|=n$, then it is given by (see \cite{martin91}):
\[ P_{SpT}(G)=\text{conv}\{ \chi_T : T\in SpT(G)\}=\{x\in \R_{\geq 0}^E : x(E)=n-1; x(E(U))\leq |U|-1, \hspace{2mm}\forall U \subsetneq V \} .\]


Call $\mathcal{X}=\{U\subset V : U\neq V, E(U) \neq \emptyset\}$ and call $\mathcal{Y}=SpT(G)$. For each pair $(x,y)=(U,T)\in \mathcal{X} \times \mathcal{Y}$, one can define slack $f(U,T)=|U|-1-\chi_T(E(U))=|U|-1-|T\cap E(U)| \geq 0$. In \cite{faenzaetal}, the following protocol is given, to \emph{guess} the value $f(x,y)$.
So Alice receives $U$, and Bob receives $T$, as respective inputs. It is a two-round protocol ($k=2$ in Definition \ref{dfnproto}).

\begin{itemize}
\item Alice picks $u_0\in U$ uniformly at random ($u_0\sim \text{Unif}(U)$). She sends $u_0$ to B.
\item Bob picks $e\sim \text{Unif}(T)$. He orients $e$ towards $u_0$ (on $T$), and sends $\vec{e}=(u,v) \in \overrightarrow{E}$ to A.
\item Alice claims $\omega_A=(n-1)$ if ($u\in U$ and $v\notin U$), and claims $0$ otherwise.
\end{itemize}

\begin{figure}{}
\centering
\includegraphics[width=0.6 \textwidth]{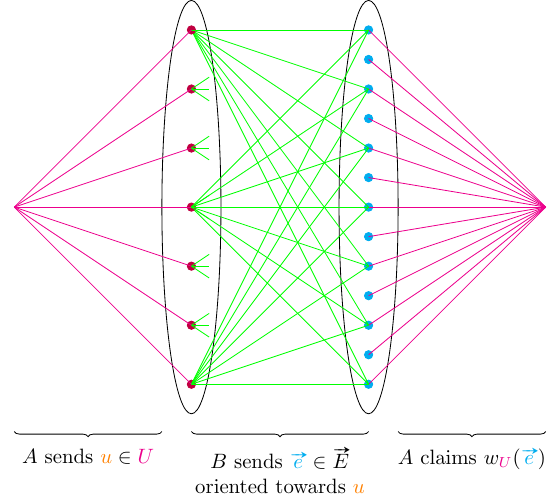}
\caption{Quoting a Markovian protocol for the spanning tree polytope}
\end{figure}

In this example, we have $V_1=V$ (the first layer of the BP is the set of vertices of $G$), and $V_2=\overrightarrow{E}$ : the second layer is the set of oriented edges of the graph. Moreover, the transition probabilities are given by :
$p^0_{A,U}=\text{Unif}(U)$, i.e. $p^0_{A,U}(v)=\frac{1}{|U|} \chi_U(v) \in \mathcal{P}(V)$.
 If $(x,y)\in \overrightarrow{E}$, we denote $xy\in E$ the corresponding non-oriented edge.
$p^1_{B,T,u_0}((x,y))=\frac{1}{n-1} \chi_T(xy) \mathbf{1}_{d_T(x,u_0)>d_T(y,u_0)}$ where $d_T$ is the graph distance on the tree $T$.
And the output, claimed by Alice given $U,(x,y)$, is 
$\omega_{A,U}(x,y)=(n-1)\mathbf{1}_{x\in U} \mathbf{1}_{y\notin U}$.

One can check that 
$\E_{A,B}[\omega | A\leftarrow U, B\leftarrow T] =S(U,T)$ for all $(U,T)$ (see Claim \ref{spanningprotocorrect} in Appendix \ref{section:appendixmarkov}) . And this yields
a (non-negative) factorization\footnote{Indeed, set $A_{U,(u,\vec{e})}= \frac{1}{n-1} p^0_{A,U}(u) \omega_{A,U}(\vec{e})=\frac{1}{|U|} \mathbf{1}_{u\in U} \mathbf{1}_{x\in U} \mathbf{1}_{y\notin U}$ if $\vec{e}=(x,y)$. \hspace{5mm} And set $B_{(u,\vec{e}), T}=(n-1) p^1_{B,T,u}((x,y))=\chi_T(xy) \mathbf{1}_{d_T(x,u)>d_T(y,u)}$.} of the slack matrix $S$, of size at most $2nm \leq n^2(n-1)$ (if $|E|=m$ and $|V|=n$).

This remark holds in greater generality. Given $f : \mathcal{X} \times \mathcal{Y}\to \R_{\geq 0}$ a matrix, and $\pi$ a Markovian protocol in $k$ rounds which is correct for $f$, in the sense of Definition \ref{dfncorrect}, one deduces a factorization $f=AB$ as follows. Say Alice is the first speaker, and Bob is the one claiming an output (so $k$ is odd). Then, for $(u_1, \ldots , u_k)\in V_1 \times \cdots \times V_k$, set :
\[A_{x,u_1, \ldots , u_k}:=p^0_{A,x}(u_1) p^{(2)}_{A,x}(u_2,u_3) \cdots p^{(k-1)}_{A,x}(u_{k-1},u_k) \hspace{2mm}, \text{and set } \]
\[B_{y,u_1, \ldots , u_k}:=p^{(1)}_{B,y}(u_1,u_2) p^{(3)}_{B,y}(u_3,u_4) \cdots p^{(k-2)}_{B,y}(u_{k-2},u_{k-1}) \omega_{B,y}(u_k) \hspace{2mm} ,\] 
or $B_{y,u_1, \ldots , u_k}:=p^{(1)}_{B,y}(u_1,u_2) \cdots p^{(k-2)}_{B,y}(u_{k-2},u_{k-1}) \E_B[\omega_B | B\leftarrow y, u_k]$, in case $\omega_B$ is non-deterministic. Then observe that \emph{correctness} of $\pi$ in the sense of Definition \ref{dfncorrect} is exactly saying that $(AB)_{x,y}=f(x,y)$.

Given $f : \mathcal{X} \times \mathcal{Y}\to \R_{\geq 0}$ a matrix with non-negative entries, consider $\text{Prot}_+(f)$, the set of all Markovian protocols (with either $A$ or $B$ as first speaker, and either $A$ or $B$ as a final claimer) which are \emph{correct} for $f$, in the sense of Definition \ref{dfncorrect} : for any entry, the average output matches $f(x,y)$ exactly, and $\PP_{A,B}(\omega \geq 0)=1$. Given a protocol $\pi\in \text{Prot}_+(f)$, we define $\Gamma(\pi)\subset V_1 \times \cdots \times V_k$, with $\gamma=(v_1, \cdots , v_k)\in \Gamma(\pi)$ if $\exists x\in \mathcal{X}, \exists y\in \mathcal{Y} : A_{x,\gamma}B_{\gamma,y}>0$. In words, $\gamma\in V_1\times \cdots \times V_k$ is in $\Gamma(\pi)$, if $\PP_{A,B}(\gamma| x,y) >0$ and $\omega_{D,z}(\gamma)>0$ for at least some $(x,y)\in \mathcal{X} \times \mathcal{Y}$ (with $(D,z)=(A,x)$ or $(B,y)$).

In the above example (with the Spanning Tree polytope), one has $\Gamma(\pi)=\{(u,(x,y))\in V\times \overrightarrow{E} : u\neq x, u \neq y\}$, because if Bob has received $u\in V$ from Alice, then the chances that he sends her some $(u,x) \in \overrightarrow{E}$, is zero (by definition of $p^1_{B,T,u}$), and although a path $\gamma=(u,(x,u))$, with $xu\in E$, has some non-zero chance of being the path of exchanged messages (i.e. $\max_{U,T} \PP_{A,B}(\gamma| U,T) >0$), such paths always end at a null claim ($\omega_{A,U}(x,u)=0$ if $u\in U$, and $p^0_{A,U}$ is such that $u\in U$ if $u$ is the first message sent). If $G=K_n$, this gives (since $u\neq x,y$ and $x\neq y$): $|\Gamma(\pi)|=n(n-1)(n-2)$ , hence the upper bound $\text{xc}(P_{SpT}(n)) \leq n(n-1)(n-2)$, recovering Martin’s bound \cite{martin91}.

\begin{prop}
\label{faenzarestated}
Let $S\in \R_{\geq 0}^{[m] \times [N]}$ be a non-negative matrix. Then: 
$$\text{rk}_+(S)=\min\{ |\Gamma(\pi)| : \pi \in \text{Prot}_+(f)\}.$$
\end{prop}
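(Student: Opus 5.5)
We prove the two inequalities separately; in both directions we work with the factorization attached to a protocol just before the statement. Write $f=S$, $\mathcal{X}=[m]$, $\mathcal{Y}=[N]$.

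\emph{Step 1: $\text{rk}_+(S)\leq |\Gamma(\pi)|$ for every $\pi\in\text{Prot}_+(f)$.} Given such a $\pi$, consider the matrices $A\in\R_{\geq 0}^{\mathcal{X}\times (V_1\times\cdots\times V_k)}$ and $B\in\R_{\geq 0}^{(V_1\times\cdots\times V_k)\times \mathcal{Y}}$ defined in the paragraph above. If $\omega$ is random, we replace it by its conditional expectation $\E[\omega_{D}\mid z,u_k]$, which is non-negative because $\PP(\omega\geq 0)=1$. All entries are then non-negative products of probabilities and non-negative outputs. Correctness (Definition \ref{dfncorrect}) gives $S_{x,y}=\sum_{\gamma}A_{x,\gamma}B_{\gamma,y}$. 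Any $\gamma\notin\Gamma(\pi)$ contributes $A_{x,\gamma}B_{\gamma,y}=0$ for every pair $(x,y)$, so we may delete those columns of $A$ and rows of $B$. This leaves a non-negative factorization of inner dimension $|\Gamma(\pi)|$. The only care needed is to repeat the bookkeeping for the four variants (which party speaks first, which party claims); the argument is identical in each case.

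\emph{Step 2: there is a protocol with $|\Gamma(\pi)|=\text{rk}_+(S)$.} Let $r=\text{rk}_+(S)$ and $S=UV$ with $U\in\R_{\geq0}^{m\times r}$ and $V\in\R_{\geq0}^{r\times N}$. If $S=0$, take the trivial protocol that always claims $0$, so $\Gamma=\emptyset$. Otherwise $r\geq1$, and we may assume no column of $U$ and no row of $V$ vanishes, since otherwise $r$ would not be minimal.

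We build a one-round protocol ($k=1$, $V_1=[r]$) in which Alice speaks first and Bob claims. For each $i$ with $U_{i,\cdot}\neq 0$, Alice samples $\ell$ with probability $p^0_{A,i}(\ell)=U_{i\ell}/\|U_{i,\cdot}\|_1$. Alice's row norm is unknown to Bob, so the normalization cannot simply be pushed into his output. We handle this by rescaling the factorization first: replace $U_{i\ell}$ by $U_{i\ell}/c_\ell$ and $V_{\ell j}$ by $c_\ell V_{\ell j}$ with $c_\ell=\max_i U_{i\ell}>0$. Then $\|U_{i,\cdot}\|_1\leq r$ for every $i$. We then add one extra "dummy" symbol carrying mass $1-\|U_{i,\cdot}\|_1/r$, which always yields output $0$. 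The cleaner choice is to absorb this dummy mass as a fake column: let Alice pick $\ell$ with probability $U_{i\ell}/r$, send nothing meaningful otherwise, and have Bob claim $\omega_{B,j}(\ell)=rV_{\ell j}\geq 0$. Then $\E[\omega\mid i,j]=\sum_\ell (U_{i\ell}/r)\,rV_{\ell j}=S_{ij}$.

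To keep $|\Gamma(\pi)|=r$, the leftover probability must be routed onto symbols whose claim is always zero. We do this by letting Alice place the remaining mass on a symbol $\ell_0\in[r]$ and simultaneously adjusting Bob's outputs. The cleanest way is to have Alice send the pair $(\ell,\text{bit})$ and note that paths ending in bit $=0$ have zero claim, hence lie outside $\Gamma(\pi)$ by definition. That definition requires $A_{x,\gamma}B_{\gamma,y}>0$, so zero-claim paths are excluded automatically. Since every column of $U$ and row of $V$ is nonzero, each $(\ell,1)$ does occur, so $|\Gamma(\pi)|=r$.

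The main obstacle is this normalization issue: Alice's messages must be genuine probability distributions, while the output must stay non-negative and the width must not grow. The resolution is that $\Gamma(\pi)$ ignores zero-output paths, so padding messages that cost nothing absorb the slack.
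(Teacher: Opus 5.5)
Your proof is correct and follows the paper's argument. Step 1 is exactly the paper's reading of a correct protocol as a non-negative factorization indexed by paths and truncated to $\Gamma(\pi)$. Step 2 is the same one-round protocol: Alice samples a column index from a rescaled row of the factorization, the leftover probability goes to a zero-output dummy message, and that message is automatically excluded from $\Gamma(\pi)$.

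The differences are only cosmetic. The paper uses a single global scalar $\lambda$ so that $\max_i\sum_k A_{i,k}=1$, together with one extra symbol $0$. You instead rescale each column, divide by $r$, and use the $(\ell,\text{bit})$ encoding, which achieves the same thing less economically.
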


See \ref{faenzarestatedbis} in Appendix for a proof. As we are restricting ourselves to markovian protocols, Proposition \ref{faenzarestated} can be seen as a particular case of Theorem 2 in \cite{faenzaetal}. In fact non-markovian protocols can be made markovian (at the cost of blowing a width $w$ in to a width $w^k$, if $k$ is the number of rounds); also BPs can be articially used to support more general\footnote{i.e. not necessarily markovian, but still with a layered structure} randomized protocols
(see Remark \ref{remarkmarkovianornot}
in appendix for a further discussion comparing the models).

Note that if a (two-party) protocol is in one round of communication ($k=1$), then it is always markovian. In particular, the protocol we give in paragraph \ref{para:newprotpermu}, is markovian. However, the one we give for the matching polytope, is not, but it is \emph{almost markovian}, in the sense that the transition probabilities fall into the formalism we have just exposed, and in the sense that making it markovian would result in a width of $O(\ln(n)n^4 1.5^n)$ rather than the $O(\ln(n) n^3 1.5^n)$ which we find here.


\section{A protocol for the matching polytope}
\label{section:PMG}

%
In this section we describe a protocol, from which one can deduce an affine extension $Q_{match}(n)$ of size $O(\text{poly}(n)1.5^n)$, of the matching polytope.
In Edmonds' description of the matching polytope, there are 3 types of constraints: non-negativity constraints $x_e \geq 0$, vertex-constraints $x(\delta(v))=\sum_{e\ni v} x_e \leq 1$, and the constraints $x(E(U))\leq \frac{|U|-1}{2}$ for all odd subsets $U$ (such that $E(U)\neq \emptyset$). So the slack matrix has three horizontal blocks, and the main one (both in size and in non-negative rank) is the block corresponding to constraints $x(E(U))\leq \frac{|U|-1}{2}$. The slack of a matching $M$ with respect to the constraint  $x(E(U))\leq \frac{|U|-1}{2}$, is $S_{U,M}=\frac{|U|-1}{2}-|M\cap E(U)|$.

Assume that for each $k\in [|1, \lfloor |V|/2\rfloor |],$ some sets $T_k=\{X_1, \ldots , X_{n_k}\}\subset \binom{V_n}{k}$ have been found, with the property that for any matching $M$ of $G$, made of $k$ edges, there exists $X_j\in T_k$ such that $M\cap \delta(X_j)=M$ ($M$ and $X_j$ are \emph{compatible}).
One could simply take $T_k= \binom{V_n}{k}$, but Lovasz lemma (see \ref{lovaszineq} below) allows to take $T_k$ of size only $n_k \leq (1+k\ln(n))2^{-k}\binom{n}{k}$, where $n=|V|$, see Corollary \ref{lovaszcoro}  in Appendix \ref{section:appendixlovasz}, and see also Lemma 3 in the article \cite{faenzaetal}. We now describe the protocol : it is similar to the protocol given in \cite{faenzaetal} for the perfect matching polytope.

Say A received (as input) an odd subset $U$, of size $|U|\geq 3$, and B received a matching $M$ of size $|M|=k$. They both have access to the sets $T_k \subset \binom{V_n}{k}$, $T_k=\{X_1, \ldots , X_{n_k}\}$ with a given encoding.
\begin{itemize}
\item B finds $X_j\in T_k$, a $k$-subset of nodes such that $M\cap \delta(X_j)=M$. He sends $(k,j)$ to Alice.
\item A picks $Z\in \{X_j, X_j^c\}$ such that $|Z\cap U| < |Z^c \cap U|$. If $|Z\cap U|=0$, then she stops the protocol and claims $\frac{|U|-1}{2}$. Else, she picks $u\in Z\cap U$ uniformly at random, then sends $u$ to Bob.
\item B finds $u'=M(u)$, and he sends $u'$ to Alice.
\item If $u'\in U$, then A claims $\frac{|U|-1}{2}-|Z\cap U|$. Otherwise she claims $\frac{|U|-1}{2}$.
\end{itemize}

\begin{figure}[ht]
\centering
\includegraphics[width=0.7 \textwidth]{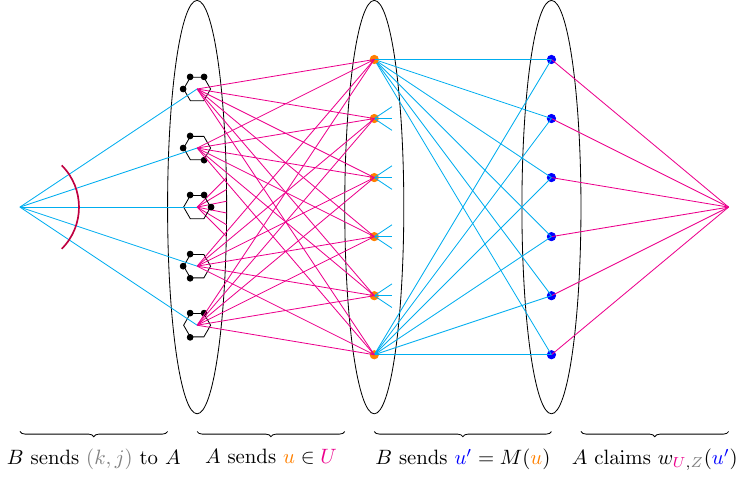}
\caption{An almost Markovian protocol for the matching polytope}
\end{figure}
Comments: upon receiving $(k,j)$ by the first round of communication, Alice learns that Bob has a matching of size $k$ at hand, and that he is pointing to the $k$-set $X=X_j$ in $j^{th}$ position in the list $T_k$ which is public (or at least shared by the two parties). In the second round, the point of choosing $Z=X_j$ or $Z=X_j^c$, rather than just always picking $Z=X_j$, is to ensure that the claimed output is always non-negative.

\footnote{Figure for $n=6$, $k=3$, one has $|T_3|=5$, given by the greedy algorithm, which here is in fact optimal.}Following \cite{faenzaetal}, if $\pi$ is a protocol which outputs a claim $\omega$, we shall say that $\pi$ is correct 
if $\PP_{A,B}[\omega \geq 0]=1$ and $\E_{A,B}[\omega| A \leftarrow U ; B\leftarrow M]=S_{U,M}$ for any pair $(U,M)$, i.e. if Alice and Bob always produce a non-negative claim, and that on average\footnote{the only source of randomness here is the private randomness of A and B used through the protocol ; the inputs are deterministic.} their claim is the slack, for any inputs.



To be more precise, Alice should be the first speaker of the protocol : she will send $e$ or  $v$, if the constraint she was given is not some $x(E(U))\leq \frac{|U|-1}{2}$ (i.e. not some odd $|U| \geq 3$), but rather an $x_e\geq 0$ or an $x(\delta(v))\leq 1$ constraint. Upon receiving $e\in E$, Bob can claim $1_{e\in M}$, which is the slack. Upon receiving $v\in V$, he can claim\footnote{(we denote $V(M)=\cup_{e\in M} e$ is the set of vertices $v\in V$ touched by $M$)} $1-\chi_M(\delta(v))=1_{v\notin V(M)}$ , which is the slack. And if she received for input some $U$, she sends $\emptyset$ to indicate to Bob that they shall follow the above protocol (i.e. he picks $X\in T_k$, etc.). So formally, we should add one round of communication (Alice speaks first), at the beginning. 

\subsubsection{Correctness of the protocol given for the matching polytope}
\label{para:matchingproto}

It is easily seen that the output is always non-negative (thanks to Alice picking $Z\in \{X_j,X_j^c\}$, accordingly), it remains to justify why, for any matching $M$ (of size $k\geq 1$), and any odd subset $U$ (of size $|U| \geq 3$), one has
\[\E_{A,B}(\omega_A | A\leftarrow U, B\leftarrow M)=\frac{|U|-1}{2}-|M\cap E(U)|=S(U,M).\]

Say B chooses $X_j\in T_K$ according to some probability distribution $p^0_M\in \mathcal{P}(T_k)$, such that $\text{supp}(p^0_M) \subset \{X\in \binom{V_n}{k} : \delta(X)\cap M=M\}$.

\begin{proof}[Proof of correctness]
If $|Z\cap U|=0$, meaning $U\subset X_j$ or $U\cap X_j=\emptyset$, then, since $X_j$ was chosen so that $M\subset E(X_j, X_j^c)$, Alice knows for a fact that $M \cap E(U)=\emptyset$ and hence that $S_{U,M}=\frac{|U|-1}{2}$. Hence her claim is exactly the slack in these cases.

Otherwise, she picks $u\in Z\cap U$ uniformly at random. Let us call $p=|E(U)\cap M|$, the number of edges of $M$ with both extremities in $U$, a number unknown to either Alice or Bob. Hence (by compatibility of $X_j$ with $M$) $p=|M\cap E(U\cap Z, U\cap Z^c)|$, i.e. $p$ is the  number of $u\in Z\cap U$ such that $M(u)\in Z^c\cap U$. Therefore, the average output claimed by Alice is
$$\E_A(\omega_A | U, Z)=\frac{|U|-1}{2} -q |Z\cap U|$$
where $q\in [0,1]$ is the chances that Alice chooses $u\in Z\cap U$ such that $M(u)\in Z^c \cap U$. So $q=p / |Z\cap U|$, and thus (no matter which $X_j$ Bob had picked, as long as it was compatible to his $M$), the average claim is $\E_A(\omega)=\frac{|U|-1}{2}-p=\frac{|U|-1}{2}-|E(U)\cap M|=S_{U,M}$.

More formally, what we have just justified is that for any distribution $p^0_M \in \mathcal{P}(T_k)$ used by Bob to pick $X_j$, as long as $\text{supp}(p^0_M)\subset \{X \in {V\choose n} : \delta(X) \cap M=M\}$, one has $\E_A(\omega_A| A\leftarrow U, X_j)=S_{U,M}$ which implies that
\begin{align*}\E_{A,B}(\omega_A | A\leftarrow U, B\leftarrow M) &=\sum_{X_j\in \text{supp}(p^0_M)} p^0_M(X_j) \E_A(\omega_A | A\leftarrow U, X_j) \\
&=\sum_{X_j : \delta(X) \cap M=M} p^0_M(X_j) \E_A(\omega_A | A\leftarrow U, Z)=S_{U,M}
\end{align*}
(the second last equality is because $Z$ is determined by $(U, X_j)$). Since we have also seen that $\PP_{A,B}(\omega_A\geq 0) =1$, we see that the protocol is correct.
\end{proof}

Now, following Proposition \ref{faenzarestated}, one can check that the above protocol yields a factorization $S=AB$ : we leave the details in paragraph \ref{para:protomatchgivesfacto}, in Appendix \ref{section:appendixlovasz}.
Therefore (we recall that $n_k \leq (1+k\ln(n))2^{-k}\binom{n}{k}$) :
\[\text{rk}_+(S)\leq |\Gamma(\pi)|\leq |E|+|V|+ n(n-1)\sum_{k=1}^{\lfloor n/2 \rfloor} n_k \leq n+\frac{n(n-1)}{2} (1+2\sum_k n_k) \leq n^3\ln(n) 1.5^n\]
and hence $\text{xc}(P_{\text{match}}(n))\leq n^3\ln(n) 1.5^n$, by Yannakakis' theorem.

\section{Sorting networks give protocols for the permutahedron}
\label{section:permu}

Edmonds representation of the permutahedron (see \eqref{edmondsreppermu}) yields the slack matrix $S$
with entries $S_{J,\sigma}=\sigma(J)-\frac{|J|(|J|+1)}{2}=\langle x_{\sigma} ,\chi_J\rangle- \langle x_{id}, \chi_{\{1,\cdots, |J|\}}\rangle $,
(thus with $2^n-2$ rows and $n!$ columns).



The permutahedron has dimension $n-1$, it has $n!$ vertices and $2^n-2$ facets 
We have seen that the Birkhoff polytope is an $n^2$-size lift of $\text{Perm}(n)$.
%
%
 Goemans (\cite{goemans}) has constructed an affine extension $Q_n$ of $Perm(n)$ even smaller.
His construction relies on sorting networks. In fact, he shows that given a sorting network $\sigma^+$ in $q$ comparators, one can use them\footnote{see Appendix \ref{para:comparisongoemans} , where we recall Goemans' construction, so as to compare it with the extension yielded by our protocol} to construct $Q_n \subset \mathbb{R}^{n+2q}$, described by only $2q$ inequalities, so that $\text{xc}(\text{Perm}(n))\leq |Q_n| \leq 2q$.
A result by Ajtai, Komlos, Szemeredi, says that there exist SNs with only $q \leq C n \ln n$ comparators, so that Goemans' construction is optimal (within a multiplicative factor), since any extension of $\text{Perm}(n)$ must have at least $n!$ vertices, thus at least $\log_2(n!)=\Omega(n\ln(n))$ facets. Let us briefly recall the definition of sorting networks, before giving a randomized protocol which results in an affine extension $Q'_n$ of $\text{Perm}(n)$, also of size $O(q)$ and dimension $O(q)$.


Given $(i,j)\in [n]^2$ a pair of coordinates, with $i<j$, set $\theta_{i,j}=\frac{1}{\sqrt{2}} (e_j-e_i)$, where $(e_{\ell})_{1\leq \ell \leq n}$ is the canonical basis of $\R^n$. Denote $H_{i,j}^+=\{x : x_i \leq x_j\}=\{x: \langle \theta_{i,j},x\rangle \geq 0\}$, and denote $\sigma_{i,j}^+$ the associated conditional reflection,
i.e. $\sigma_{i,j}^+(x)=\pi_{i,j}(x)+|\langle \theta_{i,j},x\rangle|\theta_{i,j}$, where $\pi_{i,j}$ denotes the (orthogonal) projection onto $H_{i,j}=\theta_{i,j}^{\perp}$.
For instance, recall that $x_{\sigma}=(\sigma(1), \ldots ,\sigma(n))$, if $\sigma \in\mathcal{S}_n$. So $\sigma_{i,j}^+(x_{\sigma})=x_{\sigma}$ if $\sigma(i)<\sigma(j)$, and $\sigma_{i,j}^+(x_{\sigma})=\tau_{i,j} \cdot x_{\sigma}=x_{\sigma \circ \tau_{i,j}}$ otherwise\footnote{$\mathcal{S}_n$ acts on $\R^n$ by permuting coordinates. For $\tau\in \mathcal{S}_n$ and $x\in \R^n$, one denotes $\tau \cdot x=(x_{\tau^{-1}(k)})_{k\in [n]}$. In particular, $\tau \cdot x_{\sigma}=x_{\sigma \circ \tau^{-1}}$. If $\tau=\tau_{i,j}$ is the transposition permuting $i$ and $j$, then $\tau^{-1}=\tau$ and so $\tau \cdot x_{\sigma}=x_{\sigma \circ \tau}$.}. By extension, for $\sigma\in \mathcal{S}_n$, we also write $\sigma_{i,j}^+(\sigma):=\sigma$ if $\sigma(i)<\sigma(j)$, and $\sigma_{i,j}^+(\sigma):=\sigma \circ \tau_{i,j}$ if $\sigma(i)>\sigma(j)$.

Assume we are given (at least, Alice and Bob shall be given) a sorting network $\mathcal{C}=((i_{\ell},j_{\ell}))_{\ell=0}^{q-1}$, that is a sequence of $q$ pairs $i_{\ell}<j_{\ell}$ such that $\sigma^+ := \sigma_{i_{q-1},j_{q-1}}^+ \circ \cdots \circ \sigma_{i_0,j_0}^+$ is sorting any $x_{\sigma}$ onto $x_{id}$ (see paragraph \ref{section:appendixsorting} for equivalent definitions).

In particular, given this sorting network $\sigma^+$, if $\sigma\in \mathcal{S}_n$, and $(\sigma^{(k)})_{0\leq k\leq q}$ is the sequence of permutations defined by $\sigma^{(0)}:=\sigma$, and $x_{\sigma^{(k+1)}}:=\sigma_{i_k,j_k}^+(x_{\sigma^{(k)}})$, for $0\leq k \leq q-1$, then we always have $\sigma^{(q)}=id_{[n]}$. We will abuse notations slightly and sometimes write $\sigma^{(k+1)} :=\sigma_{i_k,j_k}^+(\sigma^{(k)})$.

From the same sequence of pairs $\mathcal{C}$, one can define the operator $\sigma^-=\circ_{l=0}^{q-1} \sigma_{i,j}^-$ (with $\sigma_{i,j}^-$ the condtional reflection onto $H_{i,j}^-$), and one easily checks (see Lemma \ref{forwardisreverseandversa} in Appendix \ref{section:appendixsorting}) that $\sigma^+$ defines a SN, if and only if $\sigma^-$ defines a \emph{reverse} SN, meaning that  $\sigma^-(x_{\sigma})=\tau_0 \cdot x_{id}=(n, \ldots ,1)$ for all $\sigma\in \mathcal{S}_n$.
We thus have\footnote{where we similarly abuse notation, and write $\sigma_{i,j}^-(J)=K$ to mean that $J, K\in \mathcal{P}([n])$ are such that $\sigma_{i,j}^-(\chi_J)=\chi_K$} $\sigma^-(J)=[|J|]$ for any non-empty, proper subset $J\subset [n]$. Which means that for any $J\subset [n]$, if $(J^{(k)})_{0\leq k\leq q}$ is the sequence of subsets of $[n]$ (all of size $|J|$) given by $J^{(0)}=J$ and then $J^{(k+1)}=\sigma_{i_k,j_k}^-(J^{(k)})$ for $0\leq k\leq q-1$, then $J^{(q)}=\{1,2, \ldots , |J|\}=[|J|]$, for any (non-empty) $J\subset [n]$.

Given a sorting network $\sigma^+$ as above (to which at least A and B have access), say Alice receives $J$ for input, and Bob receives $\sigma$ for input. Then Alice computes privately the sequence $(J_k)_{0\leq k\leq q}$ (with $J_{k+1}=\sigma_{i_k,j_k}^-(J_k)$), and Bob computes privately the sequence of permutations $(\sigma^{(k)})_{0\leq k\leq q}$ (with $\sigma^{(k+1)} :=\sigma_{i_k,j_k}^+(\sigma^{(k)})$). Observe than the slack is given by
\begin{equation}
\label{slackpermu}
S_{J,\sigma}=\sigma^{(0)}(J_{0})-\sigma^{(q)}(J_q)=\sum_{l=0}^{q-1} \sigma^{(l)}(J_l) - \sigma^{(l+1)}(J_{l+1})=:\sum_{l=0}^{q-1} \delta_{i_l,j_l}(\sigma^{(l)}, J_l)
\end{equation}
by setting $\delta_{i,j}(\sigma, J)=\sigma(J)-\sigma_{i,j}^+(\sigma)(\sigma_{i,j}^-(J))$.
One can check (see Claim \ref{neededclaimsorting} in Appendix \ref{section:appendixsorting}) that for any $\sigma$, and any $J$, one has (where, if $a\in \R$, one denotes  $a_+=\max \{0,a\}$):
\begin{equation}
\label{eqn:deltarewritten}
\delta_{i,j}(\sigma, J)=\mathbf{1}_{j\in J, i\notin J} (\sigma(j)-\sigma(i))_+ +\mathbf{1}_{i\in J, j\notin J} (\sigma(i)-\sigma(j))_+
\end{equation}

Given $J\subset [n]$, and a sequence $(J_\ell)_{\ell \in [|0,q-1|]}$ (with $J_0=J$ ; $J_{\ell+1}=\sigma_{i_{\ell},j_{\ell}}^-(J_{\ell})$), define $\epsilon(l)=\epsilon(l, J)\in \{-1,0,1\}$ by :
$\epsilon(l)=1$ if [$j_l\in J_l$ and $i_l \notin J_l$] ; $\epsilon(l)=-1$ if [$j_l\notin J_l$ and $i_l\in J_l$] ; and $\epsilon(l)=0$ otherwise. Therefore $J$ (and the sequence $(J_l)$ induced by $J$ and by $\sigma^-$) gives rise to a partition of $[|0,q-1|]$, namely set $A_x=\{l\in [|0,q-1|] : \epsilon(l,J)=x\}$, for $x=-1,0,1$. Hence, \eqref{slackpermu} and \eqref{eqn:deltarewritten} give the following rewriting of the slacks:
\begin{equation}
\label{slackpermutahedronformula}
S_{J,\sigma}=\sum_{l=0}^{q-1} \delta_{i_l,j_l}(\sigma^{(l)}, J_l)=\sum_{l\in A_1} (\sigma_l(j_l)-\sigma_l(i_l))_+ +\sum_{l\in A_{-1}} (\sigma_l(i_l)-\sigma_l(j_l))_+ .
\end{equation}
From equation \eqref{slackpermutahedronformula} one can deduce the following two-round protocol which, for any given inputs $(J,\sigma)$ will, on average, output the slack $S_{J,\sigma}$. (so Alice receives $J$ as input, and Bob receives $\sigma$.)
\begin{itemize}
\item Using $\sigma_{\mathcal{C}}^-$, Alice determines the sequence $(J_l)_{0\leq l \leq q}$, and deduces a partition $[|0,q-1|]=A_0 \cup A_1 \cup A_{-1}$ (as explained above, via colors $\epsilon(l,J)\in\{0,\pm 1\}$). If $|A_0|=q$, then she claims $0$. Else she picks $l\in A_1 \cup A_{-1}$, uniformly at random. Denote $k:=q-|A_0| \geq 1$. She sends $l$ to Bob.
\item Using $\sigma^+_{\mathcal{C}}$, Bob computes the sequence of permutations $(\sigma_l)_{0\leq l \leq q}$ (with $\sigma_0=\sigma$, and $\sigma_q=id_{[n]}$). Receiving $l\in [|0,q-1|]$, he sends $w_B:=\sigma_l(j_l)-\sigma_l(i_l)$ to Alice.
\item If $l\in A_1$ then she claims $(k\cdot w_B)_+=\max\{0, k\cdot(w_B)\}$, and if $l\in A_{-1}$, then she claims $(k\cdot w_B)_-=\max\{0, -kw_B\}$.
\end{itemize}

It is clear by \eqref{slackpermutahedronformula}, that the average claim made by Alice, if A and B follow the above protocol, is $S_{J,\sigma}$. Moreover, her claim (the output) is non-negative with probability one. In this sense, the above protocol is \emph{correct}, meaning it yields to a non-negative factorization $S=AB$ of the slack matrix. However, it yields a factorization of size $2qn \geq n^2\ln(n)$ (since any sorting network has size $q\geq \log_2(n!)=\Omega(n\ln(n))$), i.e. of size greater than the size of the Birkhoff polytope.
It turns out that to divide the width of the above protocol by a factor $\Omega(n)$ (corresponding to Bob communicating $w_B\in [|-n,n|]$), it suffices to wrap it into a one-round protocol, as we now explain.

\subsection{A one-round protocol for guessing slacks of the permutahedron}
\label{para:newprotpermu}

Assume we are given $(i_0, j_0), \dots, (i_{q-1}, j_{q-1})$, which defines $\sigma^+=\circ_{\ell=0}^{q-1} \sigma_{i_{\ell},j_{\ell}}^+$ a valid sorting network (see paragraph \ref{section:appendixsorting}). Alice and Bob, after receiving as an input, respectively, $J$ for A and $\sigma$ for B, will proceed as follows:

\begin{itemize}
    \item Alice computes the sequence $(J_l)_{0 \leq l \leq q}$ according to $\sigma^-$.
    i.e., $J_0 := J$; $J_{k+1} = \sigma_{i_k, j_k}^-(J_k)$ for all $k=0,1,\ldots,q-1$.
    As $\sigma^-$ is a reverse SN, we always have $J_q = \{1, \dots, |J|\}$.
    While performing this sequence of transformations\footnote{we note that she doesnt' need to keep in memory the sequence $(J_k)$, she only needs to have the current $J_k$ (so as to compute the next), however she needs to store $\epsilon\in \{-1,0,1\}^q$, so spacewise she needs $O(n+q)$, and she can perfom finding $\epsilon$ within time $O(q)$.}, she marks the indices $l \in [|0, q-1|]$ with a "color" $\epsilon \in \{-1, 0, 1\}$ as she goes, with 
    $\epsilon(l):=\mathbf{1}_{j_l\in J_l, i_l\notin J_l} - \mathbf{1}_{i_l\in J_l, j_l\notin J_l}$.
She ends up with a partition of $[|0,q-1|]=A_0\cup A_{-1}\cup A_1$, where $A_x=\{ l : \epsilon(l)=x\}$.

    \item Alice chooses $l \in [|0, q-1|]$ uniformly at random (probability $1/q$ for each $l$). Say $l\in A_{\epsilon}$, with $\epsilon \in \{-1,0,1\}$.
    She sends $(l, \epsilon)$ to Bob.

    \item Having received $(l, \varepsilon)$, Bob computes the permutation $\sigma_l$, from his input $\sigma_0 := \sigma$ (and using $\sigma^+$). In other words, he computes $x_{\sigma_l} := \sigma_{i_{l-1}, j_{l-1}}^+ \circ \dots \circ \sigma_{i_0, j_0}^+(x_\sigma)$, and then he computes the value $w=w_l(\sigma)= \sigma_l(j_l) - \sigma_l(i_l)$ and he claims (as output of the protocol) $(q \varepsilon w)_+ = \max\{0, q \varepsilon w\}$.
\end{itemize}

Given a pair of inputs $(J, \sigma)$, observing that the claim $w_B=(q \varepsilon w_l(\sigma))_+$ is always $0$ when $\varepsilon=0$, we see that the average value of the output produced by this protocol is (the second inequality is by \eqref{slackpermutahedronformula}) :
\begin{align*}
\E_A( w_B | A \leftarrow J; B \leftarrow \sigma ) = \sum_{l \in A_1} (\sigma_l(j_l) - \sigma_l(i_l))_+ + \sum_{l \in A_{-1}} (\sigma_l(i_l) - \sigma_l(j_l))_+ = S_{J,\sigma} 
\end{align*}
Therefore, it gives a non-negative factorization $S=AB$ of size $2q$.
Set $A_{J,l,\varepsilon}=1_{\epsilon(l,J)=\varepsilon}$ and $B_{l,\varepsilon,\sigma}=(\varepsilon w_l(\sigma))_+$
where $\epsilon(l,J)\in \{-1,0,1\}$ is the color Alice gave to $l$ while computing (according to $\sigma^-$) her sequence $(J_l)$, starting from $J_0:=J$, and where $w_l(\sigma)=\sigma_l(j_l)-\sigma_l(i_l)$, with $\sigma_l=\sigma_{i_{l-1},j_{l-1}}^+ \circ \cdots \circ \sigma_{i_0,j_0}^+(\sigma)$, i.e. $\sigma_l$ the $l^{th}$ permutation computed by B, starting from $\sigma_0=\sigma$ and following $\sigma^+$. We can drop all $q$ indices $(l,0)$ because the corresponding rows in $B$ are identically $0$, so that the factorization has size (at most) $2q$.

We conjecture that if $\sigma^+$ is a minimal sorting network of $\R^n$ (see paragraph \ref{para:minimalSN} for a definition of minimality in this context), in $q$ comparators, then any randomized protocol with inputs in $\mathcal{P}([n])\times\mathcal{S}_n$, and whose average output is $S(J,\sigma)$ (for any given pair of inputs), must have width at least $q$.

\printbibliography

@inproceedings{faenzaetal,
  author       = {Yuri Faenza and       Samuel Fiorini and
                  Roland Grappe and
                  Hans Raj Tiwary},
  editor       = {Ali Ridha Mahjoub and
                  Vangelis Markakis and
                  Ioannis Milis and
                  Vangelis Th. Paschos},
  title        = {Extended Formulations, Nonnegative Factorizations, and Randomized
                  Communication Protocols},
  booktitle    = {Combinatorial Optimization - Second International Symposium, {ISCO}
                  2012, Athens, Greece, April 19-21, 2012, Revised Selected Papers},
  series       = {Lecture Notes in Computer Science},
  volume       = {7422},
  pages        = {129--140},
  publisher    = {Springer},
  year         = {2012},
  url          = {https://doi.org/10.1007/978-3-642-32147-4\_13},
  doi          = {10.1007/978-3-642-32147-4\_13},
  bibsource    = {dblp computer science bibliography, https://dblp.org}
}

@article{fiorinibirkhoff,
title = {Combinatorial bounds on nonnegative rank and extended formulations},
journal = {Discrete Mathematics},
volume = {313},
number = {1},
pages = {67-83},
year = {2013},
issn = {0012-365X},
doi = {https://doi.org/10.1016/j.disc.2012.09.015},
url = {https://www.sciencedirect.com/science/article/pii/S0012365X12004207},
author = {Samuel Fiorini and Volker Kaibel and Kanstantsin Pashkovich and Dirk Oliver Theis},
}

@article{Edmonds65matchings,
  title={Maximum matching and a polyhedron with 0,1-vertices},
  author={Jack Edmonds},
  journal={Journal of Research of the National Bureau of Standards Section B Mathematics and Mathematical Physics},
  year={1965},
  pages={125},
  url={https://api.semanticscholar.org/CorpusID:15379135}
}

@article {FRT12,
    AUTHOR = {Fiorini, Samuel and Rothvo\ss , Thomas and Tiwary, Hans Raj},
     TITLE = {Extended formulations for polygons},
   JOURNAL = {Discrete Comput. Geom.},
  FJOURNAL = {Discrete \& Computational Geometry. An International Journal
              of Mathematics and Computer Science},
    VOLUME = {48},
      YEAR = {2012},
    NUMBER = {3},
     PAGES = {658--668},
      ISSN = {0179-5376},
   MRCLASS = {52B05 (68U05 90C57)},
  MRNUMBER = {2957636},
MRREVIEWER = {Tamon Stephen},
       DOI = {10.1007/s00454-012-9421-9},
}

@article {goemans,
    AUTHOR = {Goemans, Michel X.},
     TITLE = {Smallest compact formulation for the permutahedron},
   JOURNAL = {Math. Program.},
  FJOURNAL = {Mathematical Programming},
    VOLUME = {153},
      YEAR = {2015},
    NUMBER = {1, Ser. B},
     PAGES = {5--11},
      ISSN = {0025-5610},
   MRCLASS = {90C10 (90C57)},
  MRNUMBER = {3395538},
       DOI = {10.1007/s10107-014-0757-1},
}

@article{lovasz,
title = {On the ratio of optimal integral and fractional covers},
journal = {Discrete Mathematics},
volume = {13},
number = {4},
pages = {383-390},
year = {1975},
issn = {0012-365X},
doi = {https://doi.org/10.1016/0012-365X(75)90058-8},
url = {https://www.sciencedirect.com/science/article/pii/0012365X75900588},
author = {L. Lovász}
}

@article{kaibelw2015,
  title     = {A Short Proof that the Extension Complexity of the Correlation Polytope Grows Exponentially},
  author    = {Kaibel, Volker and Weltge, Stefan},
  journal   = {Discrete \& Computational Geometry},
  volume    = {53},
  number    = {2},
  pages     = {397--401},
  year      = {2015},
  publisher = {Springer},
  doi       = {10.1007/s00454-014-9638-7}
}

@article{martin91,
title = {Using separation algorithms to generate mixed integer model reformulations},
journal = {Operations Research Letters},
volume = {10},
number = {3},
pages = {119-128},
year = {1991},
issn = {0167-6377},
doi = {https://doi.org/10.1016/0167-6377(91)90028-N},
url = {https://www.sciencedirect.com/science/article/pii/016763779190028N},
author = {R.Kipp Martin},
}

@phdthesis{szusterman,
  author       = {Maud Szusterman}, 
  title        = {Contributions to algebraic complexity, extension complexity, and affine convex geometry
},
  school       = {Université Paris Cité},
  year         = 2023,
  url          = {https://theses.fr/api/v1/document/2023UNIP7355},
  id           = {HAL/tel-04902817}
}

@article{yannak91,
title = {Expressing combinatorial optimization problems by Linear Programs},
journal = {Journal of Computer and System Sciences},
volume = {43},
number = {3},
pages = {441-466},
year = {1991},
issn = {0022-0000},
doi = {https://doi.org/10.1016/0022-0000(91)90024-Y},
url = {https://www.sciencedirect.com/science/article/pii/002200009190024Y},
author = {Mihalis Yannakakis}
}


\section{Appendix}

\subsection{Appendix on Markovian protocols}
\label{section:appendixmarkov}

We quoted the protocol (of width $|\Gamma(\pi)|\leq n(n-1)(n-2)$) from \cite{faenzaetal}, for the spanning tree polytope:

\begin{itemize}
\item Alice picks $u_0\in U$ uniformly at random. She sends $u_0$ to B.
\item Bob picks $e\sim \text{Unif}(T)$. He orients $e$ towards $u_0$ (on $T$), and sends $\vec{e}=(u,v) \in \overrightarrow{E}$ to A.
\item Alice claims $\omega_A=(n-1)$ if $u\in U$ and $v\notin U$, and claims $0$ otherwise ;
\end{itemize}
and we claimed that the choice of $p^0_{A,U}=\text{Unif}(U)$ as an initial distribution (which tells A how to pick her $u_0$ to be sent to B in the first round) wasn't crucial. The following claim states this more formally, and explains why any $p^0_{A,U}$ satisfying $\text{supp}(p^0_{A,U})\subset U$, will do the job.
\begin{claim}
\label{spanningprotocorrect}
For any spanning tree $T$, any subset $U$ (of size $|U|\geq 2$), and any $u_0\in U$:
\[ \E_B(\omega_A | A\leftarrow U , B\leftarrow T, u_0)=|U|-1-|T\cap E(U)|=S(U,T)\]
i.e. no matter how $u_0\in U$ is picked by Alice (to be sent in the first round), the average output claimed by Alice in the end is $S(U,T)$, if Bob picks $\vec{e}$ according to the transition probability $p^1_{B,T,u_0}$.
\end{claim}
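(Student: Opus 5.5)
Fix the spanning tree $T$, the subset $U$ with $|U|\geq 2$, and the first message $u_0\in U$. By the definition of $p^1_{B,T,u_0}$, Bob picks an edge $e=xy\in T$ uniformly (probability $\frac{1}{n-1}$) and sends it oriented towards $u_0$, namely $\vec e=(x,y)$ with $d_T(x,u_0)>d_T(y,u_0)$. Alice's claim is $(n-1)\mathbf{1}_{x\in U}\mathbf{1}_{y\notin U}$. The factor $n-1$ cancels the uniform probability, so
\[\E_B(\omega_A\mid U,T,u_0)=\#\{e\in T : \vec e=(x,y) \text{ oriented towards } u_0,\ x\in U,\ y\notin U\}.\]
The whole proof therefore reduces to a combinatorial identity on trees: for $u_0\in U$, the number of edges of $T$ whose tail (the endpoint farther from $u_0$) lies in $U$ and whose head lies outside $U$ equals $|U|-1-|T\cap E(U)|$.

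To prove it, root $T$ at $u_0$. Every vertex $v\neq u_0$ has a unique parent edge, and orienting towards $u_0$ sends $v$ to its parent. This gives a bijection between $V\setminus\{u_0\}$ and the edges of $T$, with each edge's tail being the vertex it is assigned to. The edges whose tail lies in $U$ are therefore in bijection with $U\setminus\{u_0\}$, since $u_0\in U$ is the root and has no parent edge. So there are exactly $|U|-1$ such edges.

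Split these $|U|-1$ edges according to whether the head (the parent) lies in $U$. If it does, the edge lies in $T\cap E(U)$. Conversely, every edge of $T\cap E(U)$ has both endpoints in $U$, so in particular its tail is in $U$, and it is counted here. Hence the edges with tail in $U$ and head in $U$ are exactly $T\cap E(U)$, and the edges with tail in $U$ and head outside $U$ number $|U|-1-|T\cap E(U)|$, which is $S(U,T)$.

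Since the conditional expectation does not depend on $u_0$, averaging over any $p^0_{A,U}$ supported on $U$ gives correctness. The only delicate point is to use the hypothesis $u_0\in U$ exactly once: it is what makes the count $|U|-1$ rather than $|U|$. The rest is bookkeeping.
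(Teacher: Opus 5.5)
Your proof is correct, and it takes a genuinely different and shorter route than the paper. The paper splits into cases according to the value of the slack. If $S(U,T)=0$, it argues that $(U,T\cap E(U))$ is a tree, so by connectedness no edge oriented towards $u_0$ can leave $U$. If $S(U,T)=k-1\geq 1$, it views $(U,T\cap E(U))$ as a forest with $k$ components $C_0\ni u_0, C_1,\ldots,C_{k-1}$. For each $C_j$ with $j>0$ it exhibits the exit edge at the vertex of $C_j$ closest to $u_0$, and asserts that ``one easily checks that these are the only ones''. That last step actually needs a small argument: each $C_j$ is a subtree, hence convex in $T$, so every other vertex of $C_j$ has its parent inside $C_j$. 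The identification $S(U,T)=k-1$ also relies on the forest edge count $|T\cap E(U)|=|U|-k$. Your tail bijection between the edges of $T$ and $V\setminus\{u_0\}$ bypasses all of this. It counts the edges with tail in $U$ as $|U|-1$ in one stroke and then subtracts $|T\cap E(U)|$, with no case distinction and no component analysis. What the paper's version buys in exchange is an explicit description of the edges that trigger the claim $n-1$: one per component of the induced forest not containing $u_0$. This makes visible the interpretation of the slack as the number of components of $(U,T\cap E(U))$ minus one, whereas your argument establishes the same count without exhibiting that structure.
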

(we recall that these transition probabilities, from $V$ to $\overrightarrow{E}$, are described by $p^1_{B,T,u}((x,y))=\frac{1}{n-1} \mathbf{1}_{xy\in T} \mathbf{1}_{d_T(x,u)>d_T(y,u)}$ ; in words, B orients all edges of $T$ towards $u$, and then picks one of these $(n-1)$ (oriented) edges, uniformly at random).
\begin{proof}
Indeed, if $f(U,T)=0$, it means that $(U, T\cap E(U))=:(U, T_U)$ is a tree (spanning $U$), so any oriented edge $\overline{xy}$ with $xy\in T$, starting from some $x\in U$, and pointing (on $T$) towards a fixed $u_0\in U$, must arrive in $U$ (otherwise it would contradict connectedness of $(U,T_U)$). Hence, if $f(U,T)=0$ and $\text{supp}(p^0_{A,U}) \subset U$, then $\PP_{A,B}(\omega_A=0 | U,T)=1$, so that $\E_{A,B}(\omega| U,T)=0=f(U,T)$, in that case.

Else, if say $f(U,T)=k-1$ for some $k\geq 2$, it means that $(U, T\cap E(U))$ is a spanning forest with $k$ connected components. Call them $C_0, C_1, \ldots , C_{k-1} \subset U$. Fixing $u_0\in C_0$, and orienting all adges of $T$ towards $u_0$, one can see that exactly $k-1$ of the $n-1$ oriented edges of $T$ (rooted at $u_0$), are of type $U\times U^c$ : each $C_j$ (for $j>0$) gives an edge $\overline{x_j y_j}$ which leaves $C_j$ (with $x_j\in C_j \subset U$, which minimizes $d_T(x,u_0)$, $x\in C_j$), and one easily checks that these are the only ones. Hence the ($B$)-probability that Alice claims $(n-1)$, is $q:=\frac{k-1}{n-1}$, so that again, $\E_{A,B}(\omega| U,T)=f(U,T)$.
\end{proof}

Hence, if $p^0_{A,U}$ is some distribution on $V$, with support in $U$, one deduces:
\begin{align*}
 \E_{A,B}(\omega_A| A\leftarrow U , B\leftarrow T)&=\sum_{u_0\in U} p^0_{A,U}(u_0) 
\E_B(\omega_A | A\leftarrow U , B\leftarrow T, u_0) \\
&=\sum_{u_0\in U} p^0_{A,U}(u_0) (|U|-1-|T\cap E(U)|)\\
&=|U|-1-|T\cap E(U)|=S(U,T).
\end{align*}

This example of a protocol, which, because it correctly \emph{guesses} the value $S(U,T)$, yields a non-negative factorization of $S$, holds in greater generality : one has $\text{rk}_+(S) \leq |\Gamma(\pi)|$ for any protocol $\pi$ such that ($\PP_{A,B}(\omega \geq 0| A\leftarrow i,B\leftarrow j)=1$ for all $i,j$, and  $\E_{A,B}(\omega | A\leftarrow i,B\leftarrow j)=S_{i,j}$, for all $(i,j)$. And moreover, any tight non-negative factorization of $S$, yields a protocol $\pi$ which is \emph{correct}, and such that $|\Gamma(\pi)|=\text{rk}_+(S)$. These two facts are gathered in the following proposition.

\begin{prop}
\label{faenzarestatedbis}
Let $S\in \R_{\geq 0}^{[m] \times [N]}$ be a non-negative matrix. Then: 
$$\text{rk}_+(S)=\min\{ |\Gamma(\pi)| : \pi \in \text{Prot}_+(f)\}.$$
\end{prop}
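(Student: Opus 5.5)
We prove the two inequalities separately. Here $\text{Prot}_+(f)$ is read with $f=S$, i.e. $f(i,j)=S_{i,j}$.

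\emph{Step 1: $\text{rk}_+(S)\leq |\Gamma(\pi)|$ for every $\pi\in\text{Prot}_+(S)$.} Let $\pi$ be a correct Markovian protocol in $k$ rounds. Fix the convention where Alice speaks first and Bob claims; the other three cases are identical up to relabelling. Define the matrices $A\in\R_{\geq 0}^{[m]\times (V_1\times\cdots\times V_k)}$ and $B\in \R_{\geq 0}^{(V_1\times\cdots\times V_k)\times [N]}$ exactly as in the general remark preceding Proposition \ref{faenzarestated}. Each entry is a product of transition probabilities, possibly times $\E_B[\omega_B\mid B\leftarrow y,u_k]$. The probabilities are non-negative, and the expectation is non-negative because $\PP(\omega\geq 0)=1$.

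Expanding the sum in Definition \ref{dfnproto} gives $(AB)_{i,j}=\E_{A,B}[\omega\mid A\leftarrow i,B\leftarrow j]$, and correctness (Definition \ref{dfncorrect}) makes this equal to $S_{i,j}$. Now delete every column $\gamma$ of $A$, and the matching row of $B$, for which $A_{x,\gamma}B_{\gamma,y}=0$ for all $(x,y)$. Every term that survives in the sum $\sum_\gamma A_{x,\gamma}B_{\gamma,y}$ has $\gamma\in\Gamma(\pi)$, so the restricted product is still $S$. Hence $S=A'B'$ with inner dimension $|\Gamma(\pi)|$.

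\emph{Step 2: there is $\pi\in\text{Prot}_+(S)$ with $|\Gamma(\pi)|\leq r:=\text{rk}_+(S)$.} Take a non-negative factorization $S=UV$ with $U\in\R_{\geq0}^{m\times r}$ and $V\in\R_{\geq0}^{r\times N}$. We build a one-round protocol ($k=1$, $V_1=[r]$) in which Alice speaks first and Bob claims.

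For each row $i$, let $\rho_i=\sum_\ell U_{i,\ell}$.
\begin{itemize}
\item If $\rho_i>0$, Alice sends $\ell$ with probability $p^0_{A,i}(\ell)=U_{i,\ell}/\rho_i$.
\item If $\rho_i=0$, the row $S_{i,\cdot}$ is zero. Alice then sends any fixed $\ell$, and she must signal that the claim should be $0$.
\end{itemize}
Bob's deterministic claim on receiving $\ell$ should be $\rho_i V_{\ell,j}$. This is the main obstacle: Bob does not know $\rho_i$.

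To remove it, first rescale the factorization. Replace the columns of $U$ and rows of $V$ by $U_{\cdot,\ell}/c_\ell$ and $c_\ell V_{\ell,\cdot}$, which preserves $S=UV$. Then absorb the row sums into the rows of $S$: set $\tilde U_{i,\cdot}=U_{i,\cdot}/\rho_i$, so $\tilde U$ is row-stochastic and $\tilde U V=D^{-1}S$ with $D=\mathrm{diag}(\rho_i)$. This alone does not give $S$, since the claim must equal $S_{i,j}$ exactly.

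So instead we let Alice be the claimer and use two rounds.
\begin{itemize}
\item Alice sends $\ell\sim U_{i,\cdot}/\rho_i$.
\item Bob replies with a dummy message in a singleton layer, and his transition probability $1$ carries no information. This is awkward, so we modify it: Bob instead sends $\ell$ back with probability $1$.
\item Alice claims $\rho_i V_{\ell,j}$. She does not know $j$, so this fails too.
\end{itemize}

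The cleanest fix is to let Bob speak first and Alice claim. Normalize $V$ column-wise instead, with $\kappa_j=\sum_\ell V_{\ell,j}$. Bob sends $\ell$ with probability $V_{\ell,j}/\kappa_j$, and Alice claims $U_{i,\ell}\,\kappa_j$. The same issue remains, but now with $\kappa_j$.

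The real fix is to rescale so that every column sum of $V$ equals a common constant. Do this by appending one extra row $\ell_0$ to $V$, with $V_{\ell_0,j}=K-\kappa_j$ where $K=\max_j\kappa_j$, and setting $U_{\cdot,\ell_0}=0$. This keeps $S=UV$ and non-negativity. Bob then sends $\ell\sim V_{\cdot,j}/K$ (if $K=0$ then $S=0$ and the claim is trivially $0$), and Alice deterministically claims $K\,U_{i,\ell}\geq 0$. The expected claim is $\sum_\ell (V_{\ell,j}/K)\,K U_{i,\ell}=S_{i,j}$, so the protocol is correct.

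Every path $\gamma=\ell_0$ gives a zero claim, so $\ell_0\notin\Gamma(\pi)$, and therefore $|\Gamma(\pi)|\leq r$. This uses the allowed variant in which Bob speaks first. Combining Step 1 and Step 2 gives the equality.
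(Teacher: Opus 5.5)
Your argument is correct and follows the paper's proof. Step 1 is the same unfolding of the expected output into $\sum_{\gamma} A_{x,\gamma}B_{\gamma,y}$ restricted to $\Gamma(\pi)$. Your final construction in Step 2 is the mirror image of the paper's: the paper rescales so that $\max_i\sum_k A_{i,k}=1$, lets Alice send $k$ with probability $A_{i,k}$ and a dummy message $0$ with the leftover mass, and has Bob claim $B_{k,j}$ (claim $0$ on the dummy), which is what your padding row $\ell_0$ does on Bob's side. In the write-up, delete the abandoned attempts (the per-row normalization by $\rho_i$ and the two-round variants), and note that the same global-max padding also works with Alice speaking first, so the Bob-first variant is not needed.
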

\begin{proof}
We have essentially already observed that if a protocol $\pi$, aimed at helping parties A, B guess a value $S(i,j)$, with $i\in [m]$ and $j\in [N]$, with $\pi$ defined on $V_1\times \cdots V_k$, with initial distribution $p^0_{A,x}\in \mathcal{P}(V_1)$, with transition probabilities $p^{(j)}_{D_j,z_j}(u_j, u_{j+1})$ (on $V_j \times V_{j+1}$), with $(D_j,z_j)=(A,x)$ or $(B,y)$ depending on parity of $j$), and with output $\omega_D=\omega_{D,z}(u_k)$, then
\[\E_{A,B}[\omega_D | A\leftarrow i ; B\leftarrow j ] =\sum_{u_1, \ldots , u_k} p^0_{A,x}(u_1) \prod_{j=1}^{k-1} p^{(j)}_{D_j,z_j}(u_j,u_{j+1}) \omega_{D_k,z_k}(u_k)  \] 
\[
= \begin{cases}
   \displaystyle \sum_{u_1, \dots , u_k} \left[p^0_{A,x}(u_1) \prod_{j=1}^{\lfloor (k-1)/2 \rfloor} p^{(2j)}_{A,x}(u_{2j},u_{2j+1}) \right] \cdot \left[ \left(\prod_{j=1}^{\lfloor k/2 \rfloor} p^{(2j-1)}_{B,y}(u_{2j-1},u_{2j})\right) \omega_{B,y}(u_k) \right] & \text{if $k$ odd} \\[2em]
   \displaystyle \sum_{u_1, \dots , u_k} \left[p^0_{A,x}(u_1) \left(\prod_{j=1}^{\lfloor (k-1)/2 \rfloor} p^{(2j)}_{A,x}(u_{2j},u_{2j+1}) \right)\omega_{A,x}(u_k) \right] \cdot \left[ \prod_{j=1}^{\lfloor k/2 \rfloor} p^{(2j-1)}_{B,y}(u_{2j-1},u_{2j}) \right] & \text{if $k$ even}
\end{cases}
\]

\[=:  \sum_{\gamma=(u_1, \ldots , u_k)} A_{i,\gamma} \cdot B_{\gamma, j} =\sum_{\gamma=(u_1, \ldots , u_k)\in \Gamma(\pi)} A_{i,\gamma} \cdot B_{\gamma, j}. \]
where the last equality is by definition of $\Gamma(\pi)$ : we don’t need to keep those $\gamma$ for which $A_{i,\gamma}=0$ for all $i$, or for which $B_{\gamma,j}=0$ or all $j$.

This shows that $\text{rk}_+(S)\leq \min\{|\Gamma(\pi)| : \pi \in \text{Prot}_+(S) \}$.

Now assume that $\text{rk}_+(S)=r$, hence there exists a non-negative factorization $S=AB$, with $A\in \R_{\geq 0}^{[m]\times [r]}$ and $B\in \R_{\geq 0}^{[r] \times [N]}$.
Up to multiplying $A$ by some $\lambda >0$ (and $B$ by $\lambda^{-1}$), assume that $\max_{i\in [m]} \sum_{k=1}^r A_{i,k}=1$.
Define $p^0_{A,i}$ to be a probability distribution on $[|0,r|]$, with $p^0_{A,i}(k):=A_{i,k} \in [0,1]$ if $k\in [r]$, and with $p^0_{A,i}(0):=1-\sum_{k=1}^r A_{i,k}$.
And set $\omega_{B,j}(0):=0$, and $\omega_{B,j}(k)=B_{k,j} \geq 0$. Then $\PP_{A,B}(\omega_B \geq 0)=1$, and
\[ \E_{A,B}[\omega_B | A\leftarrow i,B\leftarrow j]=\E_A[\omega_B | i,j ]=\sum_{k=0}^r p^0_{A,i}(k) \omega_{B,j}(k)=\sum_{k=1}^r A_{i,k} B_{k,j}=S_{i,j} \]
so that the protocol is correct (it is non-negative, and outputs the correct value, on average). Moreover here $|\Gamma(\pi)|\leq k$ (a priori $\Gamma(\pi)\subset [|0,k|]$, but $0\notin \Gamma(\pi)$, because $\omega_{B,j}(0)=0$ for all $j\in [N]$.)
\end{proof}

\emph{Remark :}
\label{remarkmarkovianornot}
     The above characterization \ref{faenzarestated} of $\text{xc}(P)$, or rather, of $\text{rk}_+(S)$, can be seen as a particular case of Theorem 2 in \cite{faenzaetal}, where a broader class of protocols is considered. The branching program structure does allow to recover their result, but somewhat artificially. To encompass non-markovian protocols, we shall allow more general probability distributions (and not restrict to \emph{transition probabilities} as we have), e.g. if the protocol is in $k=3$ rounds, with $V_1\times V_2\times V_k$ as sets of possible messages to exchange (and say $A$ speaks first), then $p^0_{A,x}\in \mathcal{P}(V_1), p^{(1)}_{B,y,u_1}\in \mathcal{P}(V_2)$ (corresponding to transition probabilities $p^{(1)}_{B,y}(u_1,u_2)$ of above), and one shall define probability distributions $p^{(2)}_{A,x,u_1,u_2}\in \mathcal{P}(V_3)$, for each possible $(u_1,u_2)$, i.e. Alice can also use former messages (namely, her choice of $u_1$) for deciding which $u_3\in V_3$ to send to Bob, and not just make her choice depend on lastly received $u_2$ (and on her input $x$). Similarly, the output $\omega_{D,z}$ shall be allowed to depend on $(u_1,\ldots,u_k)$ entirely, and not just on $u_k$. We chose to restrict to markovian protocols\footnote{Calling $|\Gamma(\pi)| \approx 2^{\text{cost}(\pi)}$ the \emph{width} of a given protocol, note that an arbitrary protocol of width $w$, can be transformed into a markovian protocol of width at most $w^k$, if $k$ is the number of rounds of communication.} here, mainly to keep lighter notations, and because branching programs are more adapted to markovian processes (they can also be used for non-markovian ones, at the cost of allowing both parties to have a $O(|V_1| \cdots |V_k|)$-space memory at their disposal).

\subsection{Lovasz analysis of the greedy algorithm for vertex cover, and consequences for matchings.}
\label{section:appendixlovasz}

A hypergraph $H=(V,E)$ is two finite sets $V$, $E$, together with an adjacency matrix $A\in \{0,1\}^{V\times E}$, with the $e^{th}$ column of $A$ indicating which $v\in V$ \emph{lie within} $e$. We call $H$ simple if no two columns of $A$ are identical (i.e. any (hyper)edge is characterized by the set of $v$ it contains) and no two rows of $A$ are identical (i.e. any $v\in V$ is characterized by $\delta(v):=\{e\in E | A_{v,e}=1\}$, i.e. by the set of edges which contain $v$).
A matching is a subset $M\subset E$ such that $||A M||_{\infty} \leq 1$, i.e. such that $\sum_{e\in M} A_{v,e}\leq 1$ for all $v\in V$ (where we abuse slightly notation and also denote $M$ its characteristic vector, $M\in \{0,1\}^E$). A (vertex) cover is a subset $W\subset V$ such that $\min_{e\in E} \sum_{v\in W} A_{v,e} \geq 1$, i.e. it is a subset $W$ which \emph{covers} all edges. For $k\in \mathbb{Z}_{\geq 1}$, one can define a $k$-matching as an integral weight $(\omega_e)_{e\in E} \in \mathbb{Z}_{\geq 0}^E$ such that $||A \omega||_{\infty} \leq k$, i.e. such that $\sum_{e\in E} A_{v,e} \omega_e \leq k$ for each vertex $v$ (equivalently, setting $M:=\{e\in E : \omega_e \geq 1\}$, one may think of $(\omega_e)_{e\in E} \in \mathbb{Z}_{\geq 0}^E$ as a multiset, with groundset $M$, and where $e\in M$ has multiplicity $\omega_e$). A $k$-matching is called simple if $\omega_e \in \{0,1\}$ for all $e\in E$. Similarly, for $k\in \mathbb{Z}_{\geq 1}$, one can define a $k$-cover as an integral weight $(t_v)_{v\in V} \in \mathbb{Z}_{\geq 0}^V$ such that $\forall e\in E, \sum_{v\in V} A_{v,e} t_v \geq k$, i.e. a cover such that each edge is covered by at least $k$ vertices (each vertex is counted with its multiplicity $t_v$). A $k$-cover is simple if $t_v\in \{0,1\}$ for all $v\in V$.

Given $H=(V,E)$ a hypergraph, denote $\nu_k(H)$ the maximal size of a $k$-matching on $H$, and $\tau_k(H)$ the minimal size of a $k$ cover on $H$. Denote $\nu(H)$, resp. $\tau(H)$, the maximal size of a matching, resp. the minimal size of a vertex cover (i.e. $\nu(H)=\nu_1(H)$, and $\tau(H)=\tau_1(H)$). The linear relaxations of these combinatorial quantities are defined via
\begin{itemize}
\item[(for ] VertexCover) : $\text{Cov}_k(H)=\{(t_v)\in \mathbb{R}_{\geq 0}^V : \sum_{v\in V}t_v A_{v,e}  \geq k , \forall e\in E \}$ ;
\item[(for ] Matchings) : $\mathcal{M}_k(H)=\{(\omega_e)\in \mathbb{R}_{\geq 0}^E : \sum_{e\in E} A_{v,e} \omega_e \leq k, \forall v\in V\}$.
\end{itemize}
the admissible (fractional) $k$-matchings, respectively the admissible (fractional) $k$-covers on $H$. Obviously, one has $\text{Cov}_k(H)=k\cdot \text{Cov}_1(H)=:k\cdot \text{Cov}(H)$ and $\mathcal{M}_k(H)=k\cdot \mathcal{M}_1 (H)=: k\cdot \mathcal{M} (H)$. Denote $\nu_k^*(H):=\max \{ \sum_{e\in E} \omega_e | \omega \in \mathcal{M}_k(H)\}$ and $\tau_k^*(H)=\min \{\sum t_v | (t_v) \in \text{Cov}_k(H)\}$. Thus $\nu_k^*(H)=k\nu^*(H)$ and $\tau_k^*(H)=k\tau^*(H)$. Moreover, by weak duality\footnote{denote $L(\omega,t)=\sum_{e} \omega_e +\sum_v t_v -\sum_{v,e} t_v A_{v,e} \omega_e=\sum_e \omega_e +\sum_v t_v(1-\sum_e A_{v,e} \omega_e)=\sum_v t_v -\sum_e \omega_e (\sum_v t_v A_{v,e} -1)$. Note that $\sup_{\omega \geq 0} \inf_{t\geq 0} L(\omega,t)=\nu^*$ and that $\inf_{t\geq 0} \sup_{\omega\geq 0} L(\omega,t)=\tau^*$. Weak duality simply states that $\nu^*=\sup_{\omega \geq 0} \inf_{t\geq 0} L(\omega,t) \leq \inf_{t\geq 0}  \sup_{\omega \geq 0} L(\omega,t)=\tau^*$. Here strong duality (i.e. $\nu^*=\tau^*$) holds , since replacing (both) $\R_{\geq 0}^E$ with $[0,1]^E$ and $\R_{\geq 0}^V$ with $[0,1]^V$, leaves both the $\sup \inf$ and the $\inf \sup$ unchanged ; but we don't need it. }, one has $\nu^*(H)\leq \tau^*(H)$.

A simple observation is that if $\omega\in \mathbb{Z}_{\geq 0}^E$ is a $k$-matching, then $m\cdot \omega=(m\omega_e)_e$ is an $mk$-matching. Similarly, if $t=(t_v) \in \mathbb{Z}_{\geq 0}^V$ is a $k$-cover, then $m\cdot t$ is an $mk$-cover. Hence $k\cdot \nu(H) \leq \nu_k(H)$ and $k\cdot \tau(H)\geq \tau_k(H)$. Altogether the previous observations yield the following inequalities (see \cite{lovasz}, page 384):
\begin{equation}
\label{easyineq}
\nu(H)\leq \frac{\nu_k(H)}{k} \leq \frac{\nu^*_k(H)}{k}=\nu^*(H)\leq\tau^*(H)=\frac{\tau^*_k(H)}{k} \leq \frac{\tau_k(H)}{k} \leq \tau(H)
\end{equation}

\emph{Observation} : denote $\nu^s_k(H) \leq \nu_k(H)$ the maximal size of a simple $k$-matching on $H$. Similarly denote $\tau^s_k(H)\geq \tau_k(H)$ the minimal size of a simple $k$-cover (if such cover exists). Observe that if $H=(V,E)$ has maximal degree at most $\Delta$, meaning $\sum_{e\in E} A_{v,e} \leq \Delta$ for all $v\in V$, then $|E|=|E(H)|=\nu^s_{\Delta}(H)$ (since any simple matching $M$ satisifies $M\subset E$, and since $M=E$ is a $\Delta$-matching).

Given $H=(V,E)$ an hypergraph, and $v\in V$, one denotes $\text{deg}_H(v)=\sum_{e\in E} A_{v,e}$ the \emph{degree} of $v$ (number of edges which \emph{contain} $v$).

Now, given an hypergraph $H=(V,E)$, which we assume to be simple (i.e. adjacency matrix $A\in \{0,1\}^{V\times E}$ has no two identical columns, and no two identical rows either), an efficient way to find a cover, i.e. to find $W\subset V$ such that $\min_{e\in E} \sum_{v\in W} A_{v,e} \geq 1$, is the greedy algorithm. For the sake of its analysis, let us describe in detail how it proceeds. One starts with $H_0=(V_0,H_0)=: H$, and finds $v_0\in V_0$, such that $\text{deg}_{H_0}(v_0)=\Delta_0 := \max_{v\in V_0} \text{deg}_{H_0}(v)$ (thus a vertex $v_0$ which \emph{covers} as many edges $e$ as possible). Then one sets $V_1=V_0\setminus \{v_0\}$ and $E_1=E_0\setminus \delta(v_0)=E_0\setminus \{e\in E : A_{v_0, e}=1\}$, and $H_1:=(V_1, E_1)$ (so $A_1$ is obtained from $A_0:=A$ by erasing the $v_0^{th}$ row from it, and erasing all columns which had a $1$ in this row). And again, find some $v_1\in V_1$ such that $\text{deg}_{H_1}(v_1)$ is maximal, $\text{deg}_{H_1}(v_1)=\max_{v\in V_1} \sum_{e\in E_1} A_{v,e}$. One continues in this fashion, picking at each step a vertex which covers as many (not covered yet) edges as possible ; until there is no edge left, i.e. until one reaches $H_t=(V_t,E_t)$ with $E_t=\emptyset$.
Denote $T=\{v_0, \ldots , v_{t-1}\}$ the vertices which have been picked along the way : $T$ is a vertex cover of $H$ (and such a cover $T$ is said to be obtained greedily).

The following inequality is due to L. Lovasz, and can be found in \cite{lovasz}.

\begin{prop}
\label{lovaszineq}
Let $H=(V,E)$ be a hypergraph. We denote $\Delta$ its maximal degree (if $A$ is the adjacency matrix of $H$, then $\Delta=\max_{v\in V} \sum_{e\in E} A_{v,e}$). Denote $\tau^*$ the optimal cost of a fractional cover of $H$, i.e. $\tau^*=\inf \{\sum_{v\in V} t_v : (t_v)\in \text{Cov}(H)\}$. Denote $T\subset V$ a vertex cover obtained by the greedy algorithm. Then
$$|T|\leq \left(1+\frac{1}{2}+\cdots+\frac{1}{\Delta}\right) \tau^*.$$
\end{prop}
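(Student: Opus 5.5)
The plan is the standard charging argument for the greedy cover. Run the greedy algorithm on $H=H_0$, producing $v_0,\ldots,v_{t-1}$ with $T=\{v_0,\ldots,v_{t-1}\}$, and set $d_s:=\text{deg}_{H_s}(v_s)=\max_{v\in V_s}\text{deg}_{H_s}(v)$. By greediness $d_0\geq d_1\geq\cdots\geq d_{t-1}\geq 1$: the degrees only decrease as edges are removed, and $d_s\geq 1$ because $E_s\neq\emptyset$ for $s<t$. Each edge $e\in E$ is removed at exactly one step $s(e)$, namely the first $s$ with $A_{v_s,e}=1$. Give $e$ the price $c_e:=1/d_{s(e)}$. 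The $d_s$ edges removed at step $s$ then carry total price $1$, so
\[|T|=t=\sum_{e\in E}c_e .\]

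Next I will show that $(c_e)$, rescaled, is a fractional matching. Fix $v\in V$ and list the edges of $\delta(v)$ in the order in which the algorithm removes them (ties in arbitrary order). Suppose the $r$-th such edge is removed at step $s$. Just before step $s$, at least $\text{deg}_H(v)-r+1$ edges of $\delta(v)$ are still in $E_s$. If $v\in V_s$, maximality of $d_s$ gives $d_s\geq \text{deg}_{H_s}(v)\geq \text{deg}_H(v)-r+1$. If $v\notin V_s$, then $v=v_{s'}$ for some $s'<s$, so all of $\delta(v)$ was removed at step $s'$ and the edge is not removed at step $s$; this case is therefore vacuous. Hence
\[\sum_{e\ni v}c_e\leq\sum_{r=1}^{\text{deg}_H(v)}\frac{1}{\text{deg}_H(v)-r+1}=\mathcal{H}(\text{deg}_H(v))\leq \mathcal{H}(\Delta),\]
where $\mathcal{H}(m)=1+\frac12+\cdots+\frac1m$. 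So $\omega:=c/\mathcal{H}(\Delta)$ lies in $\mathcal{M}(H)$.

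To conclude, by the definition of $\nu^*$ and the weak duality $\nu^*\leq\tau^*$ recorded in \eqref{easyineq},
\[|T|=\sum_e c_e=\mathcal{H}(\Delta)\sum_e\omega_e\leq \mathcal{H}(\Delta)\,\nu^*(H)\leq \mathcal{H}(\Delta)\,\tau^*.\]
A direct alternative avoids duality: take any $t\in\text{Cov}(H)$ and write $\sum_e c_e\leq\sum_e c_e\sum_v t_vA_{v,e}=\sum_v t_v\sum_{e\ni v}c_e\leq\mathcal{H}(\Delta)\sum_v t_v$, then take the infimum over $t$.

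The only delicate step is the per-vertex bound. It requires handling ties within a single step and checking that the degree of $v$ in $H_s$ counts exactly the remaining edges of $\delta(v)$. I will handle both by ordering the edges of $\delta(v)$ by removal time and using the crude count "at least $\text{deg}_H(v)-r+1$ remaining", which covers ties automatically. Simplicity of $H$ plays no role in this argument.
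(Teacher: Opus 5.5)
Your proof is correct, but it takes a different route from the paper's, which reproduces Lovász's original argument.

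The paper never writes down a fractional matching. It counts the steps $t_k$ at which exactly $k$ new edges are covered. It then observes that the residual hypergraph $H_{s_{k+1}}$ has maximum degree at most $k$; this is where monotonicity of the $d_j$ is used. Hence the whole edge set of $H_{s_{k+1}}$ is a simple $k$-matching of $H$, which gives the layered inequalities $(J_k)$: $kt_k+\cdots+t_1\le \nu^s_k(H)\le k\nu^*$. These are combined with weights $\frac{1}{k(k+1)}$ for $k<\Delta$ and $\frac1\Delta$ for $k=\Delta$, and the harmonic number appears by telescoping.

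The two arguments produce the same certificate. An edge covered at a step with $d_{s(e)}=d$ lies in $E(H_{s_{k+1}})$ exactly when $k\ge d$, so it receives total weight $\sum_{k=d}^{\Delta-1}\frac{1}{k(k+1)}+\frac1\Delta=\frac1d=c_e$. In other words, the paper splits your price vector $c$ into layers $\frac1k\chi_{E(H_{s_{k+1}})}$, each of whose feasibility is immediate. You instead check feasibility of $c/\mathcal{H}(\Delta)$ directly, vertex by vertex, through the removal-order count.

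What each route buys:
\begin{itemize}
\item Your version needs neither the monotonicity of the $d_j$ (which you state but never use) nor the $k$-matching chain \eqref{easyineq}. It gives the finer per-vertex bound $\mathcal{H}(\deg_H(v))$, and it extends verbatim to vertex-weighted covers, as in Chvátal's dual-fitting argument. Your direct alternative even avoids weak duality.
\item The paper's version, before relaxing $\nu^s_k(H)\le k\nu^*$, yields the sharper integral bound $|T|\le\sum_{k=1}^{\Delta-1}\frac{\nu^s_k(H)}{k(k+1)}+\frac{\nu^s_\Delta(H)}{\Delta}$. This fits the $\nu_k$/$\tau_k$ framework set up just before the proposition.
\end{itemize}
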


For the sake of completeness, we quote here the proof from \cite{lovasz} , using the notations of above.

\begin{proof}[Proof (Lovasz):]
Thus let $H=:H_0=(V_0,E_0)$, $H_1=(V_1, E_1)$ (obtained after removing $v_0$ from $V_0$), ... until $H_t=(V_t, \emptyset)$, be the successive hypergraphs obtained following the greedy algorithm.

If $0\leq j \leq t-1$, denote $d_j=\text{deg}_{H_j}(v_j)$ the number of \emph{new} edges covered thanks to picking $v_j$ within $T$ (for instance, $d_0=\Delta$). By construction, for all $j\geq 0$, one has $d_{j+1}=\text{deg}_{H_{j+1}}(v_{j+1})\leq \text{deg}_{H_j}(v_{j+1})\leq d_j=\max_{v\in V_j} \text{deg}_{H_j}(v)$.
If $1\leq k\leq \Delta$, denote $t_k$ the number of $v_j$, $0\leq j \leq t-1$, such that $|E_{j+1}|=|E_j|-k$, i.e. the number of vertices $v_j$ ($0\leq j<t$) which resulted in exactly $k$ new edges being covered. Hence $|T|=t=t_{\Delta}+t_{\Delta-1}+ \cdots +t_2+t_1$. And denote $s_k=\text{card}\{j\geq 0 : d_j\geq k\}$, for instance $s_{\Delta+1}=0$, $s_{\Delta}=t_{\Delta}$, $s_{\Delta-1}=t_{\Delta}+t_{\Delta-1}$, ... $s_{2}=t_{\Delta}+\cdots+t_2=t-t_1$. Thus note that $\Delta_j=\max_{v\in V_j} \text{deg}_{H_j}(v)\leq k-1$, whenever $j\geq s_k$.

By the \emph{observation} following (\ref{easyineq}), one has $\nu^s_{\Delta}(H)=|E_0|=|E(H)|=\Delta t_{\Delta}+ \cdots +2 t_2 +t_1$, and also $\nu^s_k(H_{s_{k+1}})=|E(H_{s_{k+1}})|=kt_k+ \cdots + 2t_2+t_1$, for all $1\leq k \leq \Delta -1$. Now by (\ref{easyineq}), one also has $\nu^s_{\Delta}(H)\leq \nu_{\Delta}(H)\leq \nu^*_{\Delta}(H)=\Delta \cdot \nu^*(H)=: \Delta \cdot \nu^*$; and\footnote{the inequality $\nu^s_k(H_{s_{k+1}})\leq \nu^s_k(H_0)$ holds by construction : if $M\subset E(H_s)$, then $M$ doesn't see any of the $v_j, 0\leq j <s$, since $E_s=E_0\setminus \left(\cup_{0\leq j \leq s-1} \delta(v_j)\right)$, thus a simple $k$-matching on $H_s$ is also a simple $k$-matching on $H_r$, $0\leq r < s$.} $\nu^s_k(H_{s_{k+1}})\leq \nu^s_k(H_0)\leq k \cdot \nu^*(H_0)=k\cdot \nu^*$ .
Altogether, with $\Delta\geq k\geq 1$ (recall $s_{\Delta+1}=0$), one finds the inequalities :
\begin{align*}
\Delta t_{\Delta} +(\Delta-1)t_{\Delta-1}+ \cdots + 2 t_2 +t_1 &\leq \hspace{2mm}\Delta \cdot \nu^*  & (J_{\Delta})\\
& ... \\
k t_k + (k-1) t_{k-1} + \cdots + t_1 &\leq \hspace{2mm}k \cdot \nu^*  & (J_k)\\
& \cdots \\
2 t_2+t_1& \leq \hspace{2mm}2 \nu^* & (J_2)\\
t_1 & \leq \hspace{2mm}\nu^* &(J_1)
\end{align*}
Now, mutilpying $(J_k)$ by $\frac{1}{k(k+1)}$ for $1\leq k <\Delta$, multiplying $(J_{\Delta})$ by $\frac{1}{\Delta}$, and adding these $\Delta$ inequalities, one finds an inequality $\sum_{k=1}^{\Delta} \alpha_k t_k \leq \alpha_0 \nu^*$, with $\alpha_{\Delta}=\Delta \cdot \frac{1}{\Delta}=1$, and for $1\leq k<\Delta$, $\alpha_k=k \left( \frac{1}{k(k+1)}+\frac{1}{(k+1)(k+2)}+ \cdots + \frac{1}{(\Delta-1)\Delta}+\frac{1}{\Delta}\right)=1$ as well, and with $\alpha_0=\sum_{k=1}^{\Delta-1} \frac{k}{k(k+1)}+ \Delta \cdot \frac{1}{\Delta}=\sum_{j=1}^{\Delta} \frac{1}{j}$. Hence one finds
$$|T|=t_{\Delta}+\cdots+t_2+t_2 \leq \left(1+\frac{1}{2}+\cdots +\frac{1}{\Delta}\right) \nu^* \leq \left(1+\frac{1}{2}+\cdots +\frac{1}{\Delta}\right) \tau^*$$
as claimed.
\end{proof}

Here is a corollary of Propostion \ref{lovaszineq}, concerning $k$-matchings on the perfect graph $K_n=(V_n, E_n)$ (say $V_n=[n]$ and $E_n=\{\{i,j\} : 1\leq i<j \leq n\}$). 
Let us mention that the case $n$ even and $k=n/2$ of the following corollary of Lovasz's inequality \ref{lovaszineq}, was derived in \cite{faenzaetal}, using the same argument.
\begin{cor}
\label{lovaszcoro}
Let $1\leq k\leq \lfloor n/2 \rfloor$. Denote $\mathcal{M}_k$ the sets of matchings $M\subset E_n$, of size $|M|=k$. Denote $\mathcal{P}_k(V_n)$ the set of $k$-subsets of $V_n$. Then there exists $T_k \subset \mathcal{P}_k(V_n)$, of size at most $(1+k\ln(n))2^{-k}{n\choose k}$ and such that for any matching $M$ on $K_n$ with $|M|=k$, one can find $X\in T_k$ such that $\delta(X) \cap M=M$, i.e. such that $M$ induces an injection from $X$ onto $X^c=V_n\setminus X$, that is, $M \subset E(X, V_n\setminus X)$.
\end{cor}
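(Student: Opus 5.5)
We will apply Proposition \ref{lovaszineq} to a suitable hypergraph. Let $H_k=(V,E)$ have vertex set $V:=\mathcal{P}_k(V_n)$ (the candidate sets $X$) and hyperedge set $E:=\mathcal{M}_k$, with $X$ incident to $M$ iff $M\subset E(X,V_n\setminus X)$. A vertex cover of $H_k$ is then a family $T_k\subset\mathcal{P}_k(V_n)$ with the required compatibility property, and the greedy cover has size at most $(1+\frac12+\cdots+\frac1\Delta)\tau^*(H_k)$. It therefore suffices to bound $\tau^*(H_k)$ and $\Delta$. (If $H_k$ is not simple, we merge identical rows/columns; this changes neither the cover problem nor the bound.)

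\textbf{Bounding $\Delta$.} For fixed $X$ with $|X|=k$, the compatible $k$-matchings are the injections $X\to X^c$ (each vertex of $X$ matched to a distinct vertex outside $X$, forcing all $k$ edges to cross). So every vertex has degree $\Delta=(n-k)(n-k-1)\cdots(n-2k+1)\le n^k$. Hence $1+\frac12+\cdots+\frac1\Delta\le 1+\ln\Delta\le 1+k\ln n$.

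\textbf{Bounding $\tau^*$.} Since $H_k$ is invariant under $\mathcal{S}_n$, which acts transitively on both $k$-sets and $k$-matchings, we take the uniform fractional cover $t_X:=c$ for all $X$. Each matching $M$ with $|M|=k$ covers $2k$ vertices, and $X$ is compatible with $M$ exactly when $X$ contains one endpoint of each edge of $M$. There are $2^k$ such $X$, so the degree of a hyperedge is $2^k$. Taking $c=2^{-k}$ gives a fractional cover, so
\[\tau^*(H_k)\le 2^{-k}{n\choose k}.\]
Combining the two bounds yields $|T_k|\le(1+k\ln n)\,2^{-k}{n\choose k}$.

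The only delicate point is checking that the uniform weight is feasible, which amounts to the count of $2^k$ compatible $k$-sets per matching. This holds because $|X|=k=|M|$: the set $X$ must consist of exactly one endpoint from each edge of $M$ and cannot contain any vertex outside $V(M)$. The remaining steps are direct invocations of Proposition \ref{lovaszineq} together with $H_\Delta\le 1+\ln\Delta$.
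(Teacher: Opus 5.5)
Your proof is correct and follows the paper's argument: the same incidence hypergraph of $k$-subsets against $k$-matchings, the degree $\Delta=(n-k)(n-k-1)\cdots(n-2k+1)$ giving $1+\frac12+\cdots+\frac1\Delta\le 1+k\ln n$, and the uniform weight $2^{-k}$ as a fractional cover, justified by the same count of $2^k$ compatible transversals per matching. Your remark about merging identical rows is a small extra the paper omits (it only arises when $n=2k$, where $X$ and $X^c$ have the same neighbourhood), and it is harmless either way, since the Lov\'asz argument never uses simplicity.
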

\begin{proof}
Denote $\mathcal{X}_k=\mathcal{P}_k(V_n)$ the set of $k$-subsets of $V_n$. Let $H_k=(\mathcal{X}_k,\mathcal{M}_k)$ be the hypergraph with vertex set $\mathcal{X}_k$, with edge set $\mathcal{M}_k$, and with adjacency matrix $A_{X,M}=1$ if $\delta(X) \cap M=M$, and $A_{X,M}=0$ otherwise. Hence each \emph{(hyper)edge} $M$ \emph{contains} $2^k$ vertices $X\in \mathcal{X}_k$, and each vertex $X$ has degree $\Delta=\frac{(n-k)!}{(n-2k)!}=(n-k) \cdots (n-2k+1) \leq (n-k)^k$. Therefore applying Proposition \ref{lovaszineq} yields the existence of a vertex cover $T_k \subset \mathcal{X}_k$, of size 
$$|T_k| \leq \left(1+\frac{1}{2}+\cdots+\frac{1}{\Delta}\right) \tau^*\leq (1+\ln(\Delta)) \tau^* \leq (1+k\ln(n)) \tau^*$$
with $\tau^*=\inf\{ \sum_{X\in \mathcal{X}_k} t_X | t\in \text{Cov}_k\}$ and where $\text{Cov}_k=\{(t_X)\in \R_{\geq 0}^{\mathcal{X}_k} | \forall M\in \mathcal{M}_k, \sum_X t_X A_{X,M} \geq 1\}$. Letting $t_0\in \R_{\geq 0}^{\mathcal{X}_k}$ be the constant vector with all entries equal to $2^{-k}$, one sees that $t_0\in \text{Cov}_k$, and thus $\tau^*\leq 2^{-k} {n\choose k}$ (the latter is in fact an equality\footnote{Note that if $\pi\in \mathcal{S}_n$ is a permutation of $[n]$, then it induces a bijection on $\mathcal{X}_k$, as well as on $\mathcal{M}_k$. If $t\in \mathcal{R}_{\geq 0}^{\mathcal{X}_k}$, then denote $\pi\cdot t$ the vector with entries $(\pi \cdot t)_X=t_{\pi^{-1}(X)}$. Then $t\in \text{Cov}_k$ if and only if $\pi\cdot t\in \text{Cov}_k$, and $\text{cost}(t)=\text{cost}(\pi\cdot t)=\sum_X t_X$, for any $\pi \in \mathcal{S}_n$. Hence if $t\in \text{Cov}_k$ is such that $\text{cost}(t)=\tau^*$, consider $t^*=(n!)^{-1} \sum_{\pi} \pi\cdot t$ ; one has $t^*\in \text{Cov}_k$ by convexity, and $\text{cost}(t^*)=\text{cost}(t)=\tau^*$. The action $(\pi, t) \mapsto \pi\cdot t$ is transitive on $\text{Cov}_k$ : this shows that $\tau^*$ is reached by some constant vector $t\in \R_{\geq 0}^{\mathcal{X}_k}$, and thus that $\tau^*=2^{-k}{n\choose k}$, for this $H_k=(\mathcal{X}_k,\mathcal{M}_k)$.}). Therefore $|T_k|< (1+k\ln(n)) 2^{-k} {n\choose k}$ as claimed.
\end{proof}
Note that the statement of Corollary \ref{lovaszcoro}, also holds if instead of $G=K_n=(V_n, E_n)$, one considers $G$ an arbitrary graph on $n$ vertices (simply take the same $T_k$ as the one yielded by the above proof). In practice, to find this $T_k$ takes the running time of the greedy algorithm for Vertex Cover, with $H=(V,E)$ a $\Delta$-regular graph ($\Delta=\frac{(n-k)!}{(n-2k)!}$) on $N={n\choose k}$ vertices, thus finding $T_k$ takes $O(N\cdot \Delta \cdot \log(N))=O(n^{2k}k^{-k})$. So finding all $(T_k, 1\leq k\leq \lfloor n/2\rfloor)$ requires $\Theta(n^n)$ time. This means that if one would like to use the protocol of \ref{section:PMG} so as to optimize a convex function over $P_{\text{match}}(n)$ by first optimizing it over a (lifted) convex function over $Q'_{\text{match}}(n)$ (the extension of $P_{\text{match}}(n)$ resulting from the protocol), one shall already have access to a given $(T_k)_{1\leq k \leq n}$, found beforehand, and stored. Otherwise, (i.e. if one must compute the inequalities describing $Q'_n$, and hence first compute some $T_k$) in practice, the gain (in the running time of say the interior point method) supposedly yielded by the use of the compact formulation $Q'_n$, will remain theoretical. The same remark is valid for the lift $Q_n$ of size $\text{poly}(n) 2^{n/2}$ found by \cite{faenzaetal} for the perfect matching polytope.

\subsubsection{Deducing a non-negative factorization from the $4$-round protocol of section \ref{section:PMG}, page 6}
\label{para:protomatchgivesfacto}

Set $\Gamma(\pi)=E\cup V\cup \{\gamma=(\emptyset, (k,j),u,u') : 1\leq k \leq n/2 ; 1\leq j\leq n_k ; u\in V ; u'\in V, u'\neq u\}$.
If $e\in E\subset \Gamma$, set $A_{e',e}=\delta_{e',e}$ for rows $e'$ of $A$, and $A_{*,e}=0$ for other rows. Similarly, if $v\in V\subset \Gamma$, set $A_{v',v}=\delta_{v,v'}$ for rows $v'$ of $A$, and $A_{*,v}=0$ for other rows. And if $\gamma=(\emptyset, (k,j), u,u')\in \Gamma$, then set $A_{x,\gamma}=0$ for all $x\in E\cup V$, and for other rows of $A$:
$$A_{U, \gamma}=p^1_{U,k,j}(u)\omega_{A,U,k,j}(u') \hspace{2mm} \text{with }
\begin{cases}
    p^1_{U,k,j} = \text{Unif}(Z \cap U) \\[1em]
    \omega_{A,U,k,j}(u') = \frac{|U|-1}{2} - \chi_U(u')|Z \cap U|
\end{cases}$$
where, given $(U,k,j)$ one defines $X=X_j\in T_k$,then  $Z=X_j\in T_k$, if $0<|U\cap X_j|\leq \frac{|U|-1}{2}$, and $Z=V_n\setminus X_j$ otherwise. (so $\omega_{A,U,k,j}(.)=\omega_{A,U,|Z\cap U|}(.)$ with $Z=Z(U,k,j)$)

Regarding the matrix $B$, one defines $B_{e,M}=\chi_M(e)$ for $e\in E\subset \Gamma$, $B_{v,M}=1-|M\cap \delta(v)|$ for $e\in V\subset \Gamma$, and, for rows $\gamma=((k,j),u,u')$:
$$B_{\gamma, M}= \mathbf{1}_{|M|=k} p^0_M(k,j) \mathbf{1}_{u'=M(u)} \hspace{2mm} \text{ with }p^0_M(k,.)=Unif(\{j\leq n_k : \delta(X_j)\cap M=M\})$$
(or : $p^0_{M}(k,\cdot)=\delta_{j_0}$ with $j_0$ the least $j\leq n_k$ s.t. $X_j\in T_k$ is  compatible with $M$).

By correctness of the above protocol : $\forall x\in E\cup V\cup \{U\subset V : |U|\geq 3, \text{ odd}\}$:
\[ S_{x,M}=\mathbf{1}_{x\in E} \chi_M(e) +\mathbf{1}_{x\in V} (1-|M\cap \delta(v)|) +\mathbf{1}_{x=U\notin V\cup E} S_{U,M}=\sum_{\gamma\in \Gamma(\pi)} A_{x,\gamma} B_{\gamma,M}\]
Therefore (we recall that $n_k \leq (1+k\ln(n))2^{-k}\binom{n}{k}$) :
\[\text{rk}_+(S)\leq |\Gamma(\pi)|\leq |E|+|V|+ n(n-1)\sum_{k=1}^{\lfloor n/2 \rfloor} n_k \leq n+\frac{n(n-1)}{2} (1+2\sum_k n_k) \leq n^3\ln(n) 1.5^n\]
and hence $\text{xc}(P_{\text{match}}(n))\leq n^3\ln(n) 1.5^n$, by Yannakakis theorem.


\subsection{Equivalent definitions of sorting networks}
\label{section:appendixsorting}
If $x\in \R^n$, we denote $\overline{x}$ the unique vector $y\in \{z\in \R^n : z_1 \leq \cdots \leq z_n\}=\cap_{i<j} H_{i,j}^+$, such that $x$ and $y$ have the same multiset of coordinates.

\begin{prop}
\label{equivdfnsSN}
Let $\mathcal{C}:=((i_0, j_0), \dots, (i_{q-1}, j_{q-1}))$ be a sequence of coordinate pairs in $[n]^2$ with $i < j$. Let $\sigma^+$ be the corresponding composition of conditional reflections:
$ \sigma^+ := \sigma^+_{i_{q-1}, j_{q-1}} \circ \dots \circ \sigma^+_{i_0, j_0} $.
Then the following conditions are equivalent:
\begin{enumerate}
    \item[(i)] $\forall x \in \R^n, \sigma^+(x) = \overline{x}$
    \item[(ii)] $\forall \sigma \in \mathcal{S}_n, \sigma^+(x_\sigma) =x_{\text{id}}=(1, 2, \dots, n)$, where we recall $x_{\sigma} =(\sigma(1), \cdots,\sigma(n))$ ;
    \item[(iii)] $\forall J \subset [n], \sigma^+(\chi_J) = \chi_{J_f}$, where $J_f = \{n-|J|+1, \dots, n\}$.
\end{enumerate}
If $\sigma^+$ satisfies $(i) / (ii)/(iii)$, then $\sigma^+$ is called a \emph{sorting network} (for $\R^n$).
\end{prop}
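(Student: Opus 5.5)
The plan is to prove the cycle (i)$\Rightarrow$(ii)$\Rightarrow$(iii)$\Rightarrow$(i), resting on two elementary facts about the maps $\sigma^+_{i,j}$. First, each $\sigma^+_{i,j}$ permutes coordinates: it swaps $x_i,x_j$ exactly when $x_i>x_j$. In particular it preserves the multiset of coordinates, so $\sigma^+$ does too. Second, each $\sigma^+_{i,j}$ commutes with every nondecreasing map $\phi:\R\to\R$ applied coordinatewise. Indeed, $(\min(a,b),\max(a,b))\mapsto(\min(\phi a,\phi b),\max(\phi a,\phi b))$ is preserved by monotone $\phi$, even with ties, since the output as an ordered pair is determined. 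Hence $\sigma^+(\phi(x))=\phi(\sigma^+(x))$ for every nondecreasing $\phi$.

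(i)$\Rightarrow$(ii) is immediate, since $\overline{x_\sigma}=x_{id}$. For (ii)$\Rightarrow$(iii), fix $J$ with $|J|=m$ and choose $\sigma\in\mathcal{S}_n$ with $\sigma(J)=\{n-m+1,\dots,n\}$. Take $\phi=\mathbf{1}_{(n-m,\infty)}$, which is nondecreasing. Then $\phi(x_\sigma)=\chi_J$, so $\sigma^+(\chi_J)=\phi(\sigma^+(x_\sigma))=\phi(x_{id})=\chi_{J_f}$.

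(iii)$\Rightarrow$(i) is the $0$-$1$ principle, and I expect it to be the only step needing care. Take $x\in\R^n$ and suppose $y=\sigma^+(x)$ is not sorted, say $y_a>y_{a+1}$. Choose the threshold $\phi=\mathbf{1}_{[y_a,\infty)}$, which is nondecreasing, and let $J=\{k: x_k\ge y_a\}$. Then $\sigma^+(\chi_J)=\phi(y)$ has a $1$ in position $a$ and a $0$ in position $a+1$. Such a vector is not of the form $\chi_{J_f}$, which contradicts (iii). Hence $y$ is nondecreasing. Since $y$ has the same multiset of coordinates as $x$, we get $y=\overline{x}$.

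Two degenerate cases need a check: $J=\emptyset$ or $[n]$, where (iii) holds trivially. The other subtlety is the tie convention in the definition of $\sigma^+_{i,j}$ when $x_i=x_j$. There either choice gives the same vector, so the commutation with $\phi$ still holds.
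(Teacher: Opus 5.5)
Your proof is correct and follows essentially the same route as the paper. Both prove the cycle (i)$\Rightarrow$(ii)$\Rightarrow$(iii)$\Rightarrow$(i) using the fact that non-decreasing threshold maps applied coordinatewise commute with each $\sigma^+_{i,j}$. The paper thresholds at the midpoint $(z_k+z_{k+1})/2$ where you use $\mathbf{1}_{[y_a,\infty)}$, which makes no difference. You also spell out two points the paper leaves implicit: why the commutation holds, and why a sorted $\sigma^+(x)$ must equal $\overline{x}$ (the multiset of coordinates is preserved).
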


\begin{proof}
\textbf{(i) $\Rightarrow$ (ii):} This is immediate since $\overline{x_{\sigma}}=x_{\text{id}}$ for any $\sigma \in \mathcal{S}_n$.

\medskip
\noindent \textbf{(ii) $\implies$ (iii):} Let $J \subset [n]$ and let $k = |J|$, we may assume $1\leq k\leq n-1$. We define a permutation $\sigma \in \mathcal{S}_n$ such that $\sigma$ maps the indices in $J^c$ to $\{1, \dots, n-k\}$ and the indices in $J$ to $\{n-k+1, \dots, n\}$. 

Let $f: \R \to \{0,1\}$ be the threshold function defined by $f(u) = 1$ if $u \geq n-k+\frac{1}{2}$ and $f(u) = 0$ otherwise. By construction, we have $f(x_\sigma) = \chi_J$. Since $f$ is non-decreasing, it commutes with each conditional reflection, and thus with $\sigma^+$. Assuming (ii), we have:
\[ \sigma^+(\chi_J) = \sigma^+(f(x_\sigma)) = f(\sigma^+(x_\sigma)) = f(x_{\text{id}}) \]
Since $x_{\text{id}} = (1, \dots, n)$, the function $f$ maps the first $n-k$ coordinates to $0$ and the last $k$ coordinates to $1$. Thus, $f(x_{\text{id}}) = \chi_{\{n-k+1, \dots, n\}} = \chi_{J_f}$, which proves (iii).

\medskip
\noindent \textbf{(iii) $\implies$ (i):} Suppose there exists $x \in \R^n$ such that $z = \sigma^+(x)$ is not sorted. There must exist an index $k$ such that $z_k > z_{k+1}$. Let $u_0 = (z_k + z_{k+1})/2$ and define the threshold function $f(a) = 1$ if $a \geq u_0$ and $f(a) = 0$ otherwise. 
Setting $\chi_J = f(x)$, the commutation property yields:
\[ \sigma^+(\chi_J) = \sigma^+(f(x)) = f(\sigma^+(x)) = f(z) \]
The vector $f(z)$ has a $1$ at index $k$ and a $0$ at index $k+1$. Such a vector is not sorted, i.e., $f(z) \neq \chi_{J_f}$, which contradicts (iii).
\end{proof}

 Denote $\tau_0$ the isometry of $\R^n$ given by $\tau_0 \cdot x=(x_n, \ldots , x_1)$, for $x=(x_1, \ldots ,x_n)\in \R^n$. By extension, we denote $\tau_0 \cdot \sigma=\sigma \circ \tau_0$, the permutation defined by $x_{\tau_0 \cdot \sigma}=\tau_0 \cdot x_{\sigma}$, and we denote $\tau_0(J)$ the $|J|$-subset of $[n]$ whose characteristic vector is given by $\chi_{\tau_0(J)}:=\tau_0 \cdot \chi_J$. The same proof as above (but with threshold functions of type $f(x)=\mathbb{1}_{x\leq a}$) gives a similar statement for \emph{reverse sorting networks}.

\begin{prop}
\label{equivdfnsRSN}
Let $(i_0, j_0), \dots, (i_{q-1}, j_{q-1})$ be a sequence of coordinate pairs in $[n]^2$ with $i < j$. Let $\sigma^-$ be the composition of conditional reflections (onto the $H_{i,j}^-$): $\sigma^- := \sigma^-_{i_{q-1}, j_{q-1}} \circ \dots \circ \sigma_{i_0, j_0}^-$.
Then conditions (i),(ii), (iii) are equivalent: we call $\sigma^-$ a \emph{reverse} sorting network when they are satisfied.
\begin{enumerate}
    \item[(i)] $\forall x \in \R^n, \sigma^-(x) = \tau_0 \cdot\overline{x}$
    \item[(ii)] $\forall \sigma \in \mathcal{S}_n, \sigma^-(x_\sigma) =x_{\tau_0 \cdot\text{id}}=(n, \cdots, 2,1)$ ;
    \item[(iii)] $\forall J \subset [n], \sigma^-(\chi_J) = \chi_{[|J|]}=\chi_{\{1,2, \ldots ,|J|\}}$.
\end{enumerate}
\end{prop}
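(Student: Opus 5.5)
The plan is to mirror the proof of Proposition \ref{equivdfnsSN}, replacing the sorted vector by its reversal and non-decreasing threshold functions by non-increasing ones where needed. The basic tool is a commutation property. A conditional reflection $\sigma^-_{i,j}$ maps $x$ to the vector with $\max(x_i,x_j)$ in position $i$ and $\min(x_i,x_j)$ in position $j$, so that $x_i\geq x_j$ afterwards, and leaves the other coordinates unchanged. Hence, for any non-decreasing $f:\R\to\R$ applied coordinatewise, $f(\sigma^-_{i,j}(x))=\sigma^-_{i,j}(f(x))$, since $\max$ and $\min$ commute with monotone non-decreasing maps. By composition, $f\circ\sigma^-=\sigma^-\circ f$.

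\textbf{(i) $\Rightarrow$ (ii)} is immediate, since $\tau_0\cdot\overline{x_\sigma}=(n,\ldots,1)$.

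\textbf{(ii) $\Rightarrow$ (iii).} Take $J$ with $1\le |J|=k\le n-1$; the cases $J=\emptyset$ and $J=[n]$ are trivial. Choose $\sigma$ sending $J$ onto $\{n-k+1,\ldots,n\}$ and $J^c$ onto $[n-k]$. Let $f(u)=\mathbf 1_{u\geq n-k+1/2}$, so that $f(x_\sigma)=\chi_J$. Then
\[\sigma^-(\chi_J)=f(\sigma^-(x_\sigma))=f((n,\ldots,1)),\]
and $f((n,\ldots,1))$ has ones exactly in its first $k$ positions, i.e. it equals $\chi_{[k]}$.

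\textbf{(iii) $\Rightarrow$ (i).} Set $z=\sigma^-(x)$. First, $z$ is a permutation of the coordinates of $x$, since each step only swaps two entries. So it suffices to show that $z$ is non-increasing. Suppose instead that $z_k<z_{k+1}$ for some $k$. Put $u_0=(z_k+z_{k+1})/2$ and $f(a)=\mathbf 1_{a\ge u_0}$, and let $J$ be defined by $\chi_J=f(x)$. By commutation, $\sigma^-(\chi_J)=f(z)$, which has a $0$ at position $k$ and a $1$ at position $k+1$. Such a vector is not of the form $\chi_{[|J|]}$, contradicting (iii). Therefore $z$ is the non-increasing rearrangement of $x$, namely $\tau_0\cdot\overline{x}$.

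The only point needing care is the commutation lemma together with the orientation of $\sigma^-_{i,j}$ (the larger entry goes to the smaller index $i$). This orientation must be fixed consistently with the definition of $H^-_{i,j}=\{x_i\ge x_j\}$. The paper hints at using thresholds $\mathbf 1_{x\le a}$; the non-decreasing thresholds above are the simpler choice, because they commute directly with $\max$ and $\min$.
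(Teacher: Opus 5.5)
Your proof is correct and follows the paper's route: the paper simply says to transpose the threshold-function/commutation argument of Proposition \ref{equivdfnsSN} to $\sigma^-$, which is what you do, and you also make explicit the step that $z=\sigma^-(x)$ is a rearrangement of $x$. Your choice of non-decreasing thresholds is the right one. A non-increasing threshold such as the suggested $\mathbf{1}_{x\le a}$ does not commute with $\sigma^-_{i,j}$; it only satisfies $f\circ\sigma^-_{i,j}=\sigma^+_{i,j}\circ f$.
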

NB : note that $\{1,2, \ldots ,|J|\}=\tau_0([|n-|J|+1,n|])=\tau_0(J_f)$ (with $J_f$ from Proposition \ref{equivdfnsSN}).
\newcommand{\one}{\mathbf{1}}

\begin{lem}
\label{forwardisreverseandversa}
Let $\sigma^+$ and $\sigma^-$ be the forward and reverse operators associated with a sequence of pairs $(i_k, j_k)_{k=0}^{q-1}$. Then $\sigma^+$ is a sorting network if and only if $\sigma^-$ is a reverse sorting network.
\end{lem}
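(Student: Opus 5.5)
The plan is to relate $\sigma^-$ to $\sigma^+$ through the coordinate negation $N(x)=-x$, and then transfer the sorting property using Propositions \ref{equivdfnsSN} and \ref{equivdfnsRSN}.

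The first step is a conjugation identity for a single comparator. For $i<j$, $\sigma_{i,j}^+$ puts $\min(x_i,x_j)$ in coordinate $i$ and $\max(x_i,x_j)$ in coordinate $j$. Likewise, $\sigma_{i,j}^-$, the conditional reflection onto $H_{i,j}^-=\{x_i\geq x_j\}$, puts the max in $i$ and the min in $j$. Since $\max(-a,-b)=-\min(a,b)$, this gives
\[\sigma_{i,j}^-(x)=-\sigma_{i,j}^+(-x)\quad\text{for all }x\in\R^n,\]
that is, $\sigma_{i,j}^-=N\circ\sigma_{i,j}^+\circ N$. Because $N\circ N=\mathrm{id}$, composing over the $q$ comparators in the same order yields $\sigma^-=N\circ\sigma^+\circ N$.

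Next I will use criterion (i) of both propositions. Suppose $\sigma^+$ is a sorting network, and fix $x\in\R^n$. Then
\[\sigma^-(x)=-\sigma^+(-x)=-\overline{(-x)}.\]
It remains to check that $-\overline{(-x)}=\tau_0\cdot\overline{x}$. The sorted increasing version of $-x$ lists the values $-x_{(n)}\leq\cdots\leq -x_{(1)}$, where $x_{(1)}\leq\cdots\leq x_{(n)}$ are the sorted coordinates of $x$. Negating this gives the decreasing arrangement $(x_{(n)},\dots,x_{(1)})=\tau_0\cdot\overline{x}$. So (i) of Proposition \ref{equivdfnsRSN} holds, and $\sigma^-$ is a reverse sorting network.

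The converse is symmetric: from $\sigma^+=N\circ\sigma^-\circ N$ we get
\[\sigma^+(x)=-\tau_0\cdot\overline{(-x)}=\overline{x}\]
by the same observation.

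The only delicate point is the comparator identity, in particular the sign and orientation convention for $\sigma_{i,j}^-$. I will verify it directly from the formula $\sigma_{i,j}^\pm(x)=\pi_{i,j}(x)\pm|\langle\theta_{i,j},x\rangle|\theta_{i,j}$. Writing $y=-x$, we have $\pi_{i,j}(-x)=-\pi_{i,j}(x)$ and $|\langle\theta,-x\rangle|=|\langle\theta,x\rangle|$, so
\[-\sigma^+_{i,j}(-x)=\pi_{i,j}(x)-|\langle\theta_{i,j},x\rangle|\theta_{i,j}=\sigma^-_{i,j}(x).\]
Everything after this is bookkeeping.
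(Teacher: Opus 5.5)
Your proof is correct and is essentially the paper's argument: both conjugate each comparator by an order-reversing involution that exchanges $\sigma^+_{i,j}$ and $\sigma^-_{i,j}$, compose over the network, and then invoke one of the equivalent characterizations of (reverse) sorting networks. The only difference is cosmetic. The paper uses $x\mapsto \mathbf{1}-x$ on indicator vectors, obtaining $\sigma^-(J^c)=(\sigma^+(J))^c$, and concludes with criterion (iii). You use $x\mapsto -x$ on all of $\R^n$ and conclude with criterion (i).
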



\begin{proof}
Let $\one = (1, \dots, 1) \in \R^n$, so that for any $J \subset [n]$, one has $\chi_{J^c} = \one - \chi_J$. 

Observe that for ay pair of indices $i < j$ and for any $x \in [0,1]^n$, we have the identity:
\begin{equation}
\label{eq:dualcond}
\sigma_{i,j}^-(\one - x) = \one - \sigma_{i,j}^+(x)
\end{equation}
By composing the $q$ comparators of the network and applying \eqref{eq:dualcond} repeatedly, we obtain:
\[ \sigma^-(\chi_{J^c}) = \sigma^-(\one - \chi_J) = \one - \sigma^+(\chi_J) \]
\begin{equation}
\label{eq:dualcondbis}
\text{In terms of sets, the latter rewrites : }\hspace{2mm} \sigma^-(J^c) = (\sigma^+(J))^c.
\end{equation}

Now the proof easily follows from characterizations $(iii)$ of Propositions \ref{equivdfnsSN} and \ref{equivdfnsRSN}. Indeed, 
\begin{align*}
\sigma^+ \text{ is a SN } \hspace{5mm} & \Leftrightarrow \hspace{2mm} \forall J \subset [n], \sigma^+(\chi_J)=\tau_0 \cdot \chi_{\{1,2\ldots,|J|\}} \\
&  \Leftrightarrow  \hspace{2mm} \forall J \subset [n],\left(\sigma^+(J)\right)^c=\{1,2,\ldots,n-|J|\} \\
&  \Leftrightarrow \hspace{2mm} \forall J \subset [n], \sigma^-(J)=\left(\sigma^+(J^c)\right)^c=\{1,2,\ldots,|J|\}  \hspace{2mm} \text{ by  observation \eqref{eq:dualcondbis} }\\
& \Leftrightarrow \hspace{2mm} \forall J \subset [n], \sigma^-(\chi_J)=\chi_{\{1,2\ldots,|J|\}} \hspace{5mm} \Leftrightarrow\hspace{2mm} \sigma^- \text{ is a reverse SN.}
\end{align*}
\end{proof}

Recall that for a pair $(i,j)\in [n]^2$ such that $i<j$, we have defined (page 11) the operator
$\delta_{i,j}(\sigma,J):=\sigma(J)-\sigma_{i,j}^+(\sigma_{i,j}^-(J))=\langle x_{\sigma}, \chi_J\rangle -\langle \sigma_{i,j}^+(x_{\sigma}), \sigma_{i,j}^-(\chi_J)\rangle$. 
We claim that for any $\sigma\in \mathcal{S}_n$ and any $J\subset [n]$, the following identity holds.

\begin{claim}
\label{neededclaimsorting}
$\forall \sigma, J :\hspace{2mm} \delta_{i,j}(\sigma,J)=\mathbf{1}_{j\in J, i\notin J} (\sigma(j)-\sigma(i))_+ +\mathbf{1}_{i\in J, j\notin J} (\sigma(i)-\sigma(j))_+ $.
\end{claim}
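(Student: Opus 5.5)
The identity only involves the two coordinates $i$ and $j$, so the plan is a direct case analysis on the pair $(\mathbf{1}_{i\in J},\mathbf{1}_{j\in J})$ crossed with the sign of $\sigma(j)-\sigma(i)$. First I would record how the two conditional reflections act. By the definition in Section \ref{section:permu}, $\sigma_{i,j}^+(\sigma)=\sigma$ if $\sigma(i)<\sigma(j)$, and $\sigma_{i,j}^+(\sigma)=\sigma\circ\tau_{i,j}$ otherwise; in the latter case the values at positions $i$ and $j$ are swapped. Symmetrically, $\sigma_{i,j}^-$ applied to $\chi_J$ places the larger of the two entries at position $i$. Concretely, $\sigma_{i,j}^-(J)=J$ unless $j\in J$ and $i\notin J$, in which case $\sigma_{i,j}^-(J)=(J\setminus\{j\})\cup\{i\}$. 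In every case both operations leave all coordinates other than $i,j$ untouched.

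Writing $\delta_{i,j}(\sigma,J)=\langle x_\sigma,\chi_J\rangle-\langle \sigma_{i,j}^+(x_\sigma),\sigma_{i,j}^-(\chi_J)\rangle$, the contributions of the coordinates $\ell\notin\{i,j\}$ cancel. The difference therefore reduces to a computation on the two-coordinate block. The cases are as follows.

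\textbf{Case (a): $i,j\notin J$ or $i,j\in J$.} Here $\sigma_{i,j}^-(J)=J$. The two-coordinate part of $\chi_J$ is $(0,0)$ or $(1,1)$, and both pairings are invariant under swapping the values $\sigma(i),\sigma(j)$. Hence $\delta=0$, which matches the right-hand side, since both indicators vanish.

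\textbf{Case (b): $i\in J$, $j\notin J$.} Again $\sigma_{i,j}^-(J)=J$, so the set is unchanged. The first term is $\sigma(i)$, and the second term is the value at position $i$ after sorting, which is $\min(\sigma(i),\sigma(j))$. Hence $\delta=\sigma(i)-\min(\sigma(i),\sigma(j))=(\sigma(i)-\sigma(j))_+$.

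\textbf{Case (c): $j\in J$, $i\notin J$.} Now the set moves: $\sigma_{i,j}^-(J)$ contains $i$ instead of $j$. The first term is $\sigma(j)$, and the second is again $\min(\sigma(i),\sigma(j))$. Hence $\delta=(\sigma(j)-\sigma(i))_+$.

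These three cases give exactly the claimed formula, so the identity holds for every $\sigma$ and $J$.

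The only step needing care is the conventions. I expect the main obstacle to be checking the orientation of $\sigma^-_{i,j}$ on $\chi_J$ against the definition via $H_{i,j}^-=\{x_i\geq x_j\}$, so that the mixed case (c) is the one where $J$ actually changes. I would verify this with the formula $\sigma^-_{i,j}(x)=\pi_{i,j}(x)-|\langle\theta_{i,j},x\rangle|\theta_{i,j}$, which is consistent with Proposition \ref{equivdfnsRSN}(iii) pushing elements of $J$ toward small indices.
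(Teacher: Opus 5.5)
Your proof is correct and follows the paper's argument: a case analysis on $\{i,j\}\cap J$, using that $\sigma^-_{i,j}$ changes $J$ only when $j\in J$ and $i\notin J$, and that position $i$ of $\sigma^+_{i,j}(x_\sigma)$ holds $\min(\sigma(i),\sigma(j))$. The differences are cosmetic: the paper proves the identity for arbitrary $x\in\R^n$, with $x_i,x_j$ in place of $\sigma(i),\sigma(j)$, and then specializes to $x_\sigma$; it also splits your case (a) into two separate cases.
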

\begin{proof}
The proof is a simple case analysis. It is easier to prove a seemingly more general statement, namely, for $x\in \R^n$ define $\delta_{i,j}(x,J):=\langle x,\chi_J\rangle-\langle \sigma_{i,j}^+(x), \sigma_{i,j}^-(\chi_J)\rangle$ and we claim 
\begin{equation}
\label{eqn:deltaijrewrited}
\forall J\subset [n], \forall x\in \R^n, \quad \delta_{i,j}(x,J)=\mathbf{1}_{j\in J, i\notin J} (x_j-x_i)_+ +\mathbf{1}_{i\in J, j\notin J} (x_i-x_j)_+  
\end{equation}
from which Claim \ref{neededclaimsorting} clearly follows.

If $\{i,j\}\cap J=\emptyset$, then $\sigma_{i,j}^-(J)=J$, and for any $x\in \R^n$ 
$\delta_{i,j}(x,J)=\langle x-\sigma_{i,j}^+(x),\chi_J\rangle =0$ since $(\sigma_{i,j}^+(x))_k=x_k$ for all $k\in [n]\setminus \{i,j\}$.

If $\{i,j\} \subset J$, then one also has $\sigma_{i,j}^-(J)=J$, and for $x\in \R^n$:
$\delta_{i,j}(x,J)=\langle x-\sigma_{i,j}^+(x),\chi_J\rangle=\langle x-\sigma_{i,j}^+(x),e_i+e_j\rangle=0$ (set $y=\sigma_{i,j}^+(x)$ : the second equality is also because $y_k=x_k$ for all $k\neq i,j$, and the last equality is due to the fact that $\{x_i,x_j\}=\{y_i,y_j\}$ (for any $x\in \R^n$).

Now if $j\in J$ but $i\notin J$, one has $\sigma_{i,j}^-(J)=J\triangle \{i,j\}=(J\setminus \{j\})\cup \{i\}$ and so, for $x\in \R^n$:
$\delta_{i,j}(x,J) 
= \langle x,e_j \rangle - \langle \sigma_{i,j}^+(x),e_i \rangle 
= \begin{cases} 
    0 & \text{ if } x \in H_{i,j}^- \\ 
    x_j - x_i & \text{ if } x \in H_{i,j}^+ 
  \end{cases}$ , so $\delta_{i,j}(x,J)=(x_j-x_i)_+$ for all $x\in \R^n$.

  Finally, if $j\notin J$ but $i\in J$, then $\sigma_{i,j}^-(J)=J$, and
  $\delta_{i,j}(x,J)=\langle x-\sigma_{i,j}^+(x), e_i\rangle=
  \begin{cases} 
    0 & \text{ if } x \in H_{i,j}^+ \\ 
    x_i - x_j & \text{ if } x \in H_{i,j}^- 
  \end{cases}$ and so $\delta_{i,j}(x,J)=(x_i-x_j)_+$ for all $x\in \R^n$.

  Putting the four cases together yields \eqref{eqn:deltaijrewrited}, and hence the claim.
\end{proof}

\subsection{A remark about minimal sorting networks}
\label{para:minimalSN}

Let $\mathcal{C}=((i_0,j_0), \ldots ,(i_{q-1},j_{q-1}))$ be a sequence of pairs of coordinates, yielding a sorting network $\sigma^+=\sigma_{\mathcal{C}}^+:=\sigma_{i_{q-1},j_{q-1}}^+ \circ \cdots \circ \sigma_{i_0,j_0}^+$. Let us call $\mathcal{C}$ minimal if for any $\mathcal{D}$ obtained from $\mathcal{C}$ by removing some of the pairs $(i,j)$ (and not modifying the order), $\sigma_{\mathcal{D}}^+$ isn't a sorting network. One may equivalently define minimality of $\mathcal{C}$ via the reverse sorting network $\sigma^-$ defined following the same sequence of comparators.

Saying $\mathcal{C}$ is minimal in the above sense doesn't imply that $q=|\mathcal{C}|$ is the least possible length of a sorting network. For instance consider $\mathcal{C}=\{(1,n),(1,n-1), \ldots, (2,n), (2,n-1), \ldots, (n-1,n)\}$, with $q=\frac{n(n-1)}{2}$. It is easily seen that $\mathcal{C}$ is a minimal SN : if one remove say $(k,k+u)$ (with $1\leq k \leq k+u \leq n$) from $\mathcal{C}$ (and possibly more $(i,j)$ are removed), then $J:=[k-1] \cup \{k+u\}$ will not be sorted by any $\sigma^+_{\mathcal{D}}$ with $\mathcal{D}\subset \mathcal{C} \setminus \{(k,k+u)\}$.

Assuming the sorting network $\sigma^+$ provided to Alice and Bob was minimal, and denoting $A, B$, the non-negative matrices resulting from the protocol, note that none of the $2q$ columns of $A$ and none of the rows of $B$, is all-zero. Indeed, since $B_{l,1,\sigma}=(\sigma_l(j_l)-\sigma_l(i_l))_+$, we have at least $B_{l,1,\text{id}_{[n]}}=j_l-i_l >0$ (since $\sigma_l=id$ if $\sigma_0=id$, no matter what the comparators $c_k=(i_k,j_k)$, $0\leq k \leq l-1$ are), so $B_{l,1,*}$ is never all-0s. Similarly, if $J_0=[i_l]=\{1, \ldots , i_l\}$, then $J_l=\ldots=J_0$, and so $A_{J_0, l,-1}=1_{i_l \in J_l}1_{j_l\notin J_l}=1$, while choosing $J_0=[j_l]$ gives $A_{J_0,l,-1}=0$, so columns $A_{*,l,-1}$ are never all $0s$, nor all-$1s$.
If we had $B_{l,-1,\sigma}=0$ for all $\sigma$, for some $l\in [|0,q-1|]$, it would mean that $\sigma_l(j_l)>\sigma_l(i_l)$ (for any $\sigma$, with $\sigma_l=\sigma^+_l(\sigma):=\sigma_{i_{l-1},j_{l-1}}^+ \circ \cdots \circ \sigma_{i_0,j_0}^+(\sigma)$), i.e. that $\sigma_{i_l,j_l}^+$ is always acting as $id_{\R^n}$, i.e. that $\mathcal{C}\setminus \{c_l\}$ already defines a SN (which would contradict minimality of $\mathcal{C}$).

Similarly, if we had $A_{J,l,1}=0$ for all $J$, this would mean that for all $J$, $i_l\in J_l$ or $J_l\cap \{i_l,j_l\}=\emptyset$, and in particular that while running the reverse SN $\sigma^-$ starting from some $\chi_J$, the map $\sigma_{i_l,j_l}^-$ always acts as $\text{id}_{\R^n}$, and so that the comparator $c_l$ could be removed, meaning one would stil have $\sigma^-(\chi_J)=\chi_{[|J|]}$ for all $J\subset [n]$, with $\sigma^-=\sigma^-_{\mathcal{C}\setminus \{c_l\}}$, hence contradicting minimality (by \ref{equivdfnsSN}).

Clearly, if $\min\{\text{rk}_+(A), \text{rk}_+(B)\}=m<2q$, then the (non-neg.) factorization $S=AB$, together with the further factorization of $A$ (or of $B$) gives a non-negative factorization $S=A'B'$ of size $m$ only. Hence, further than showing that no columns of $A$ is all-0s (nor any row of $B$), one would hope to prove that minimality of the sorting network, ensures  that $\text{rk}_+(A)=2q$ and that $\text{rk}_+(B)=2q$. The latter seems a rather rigid statement, however we conjecture that if $\sigma^+$ is minimal, and $A,B$ are the matrices produces by the above protocol, then $\min\{\text{rk}_+(A), \text{rk}_+(B)\} \geq q$.

For instance, with $\mathcal{C}=[(1,n),(1,n-1), \ldots, (2,n), \ldots ,(2,3), .. ,(n-1,n)]$, a sequence of $q=\frac{n(n-1)}{2}$ pairs, denote $\tau_{<(i,j)}^-$ the operator onto the subsets $J$, in other words : $\tau_{<(i,j)}^-(J)=K$ if $\chi_K=\sigma_{i_{\ell-1},j_{\ell-1}}^- \circ \cdots \circ \sigma_{i_0,j_0}^-(\chi_J)$, with 
$\ell=\ell(i,j)=i(n-i)+\frac{i(i+1)}{2}-j$
the position of comparator $c_{i,j}$ withiin this $\mathcal{C}$. Then let us denote $A_{J,(i,j),1}=1_{j\in J_l} 1_{i\notin J_l}$ and $A_{J,(i,j),-1}=1_{i\in J_l} 1_{j\notin J_l}$, corresponding to the entries $A_{J,l, \varepsilon}$, with $l=l(i,j)$. Thus $A$ is an $\mathcal{J} \times [2q]=\mathcal{J}\times [n(n-1)]$ $0$-$1$ matrix, and if we set (for every $i<j$):
$$J_{i,j,+}=[j]\setminus \{i\} \text{ and }J_{i,j,-}=[|1,i-1|] \cup \{j+1\} \text{ if } \hspace{2mm} j<n, \text{and } J_{i,n,-}=[i] \hspace{2mm} (\text{ for }j=n),$$
then it is not hard to check that $\mathcal{F}=\mathcal{F}^+\cup \mathcal{F}^-$ is a fooling set of size $2q$, with $\mathcal{F}^+=\{(J_{i,j,+},(i,j,1))\}$ and $\mathcal{F}^-=\{(J_{i,j,-},(i,j,-1))\}$. Indeed $\tau^-_{<(i,j)}(J_{i,j,+})=J_{i,j,+}=[j]\setminus \{i\}$ so $A_{J_{x,+},(x,+1)}=1$ for all $x\in \mathcal{P}=\{(i,j)\in [n]^2 : i<j\}$, while $\tau^-_{<(i,j)}(J_{i,j,-})=[i]$ and so $A_{J_{x,-},(x,-1)}=1$ for all $x$. Morever if $x=(i,j)<y=(u,v)$, i.e. if $u>i$ or $u=i$ and $v>j$, then $A_{J_{x,+}, (y,+)}=0$ [ because either $u\leq j-1$ (and so $u\in \tau_{<(u,v)}^-(J_{x,+})=[j-1]$), or $u \geq j$ thus $v>u \geq j$ ensuring $v\notin [j-1]$ (and thus $A_{J_{x,+}, (y,+)}=0$)].%
Similarly, if $x<y$, then $A_{J_{x,-}, (y,-)}=0$. Say $x=(i,j)$ and $y=(u,v)$. Then, if $u>i$, one has $\tau^-_{<(u,v)}(J_{x,-})=[i]$, so $u\notin [i]$ implies $A_{J_{x,-}, (y,-)}=0$, while if $u=i$ but $v>j$, then $i\notin J_{x,-}=\tau^-_{<(i,v)}(J_{x,-})$, so $A_{J_{x,-}, (y,-)}=0$.
Moreover, one easily checks that  $A_{J_{x,-},(y,+)}=0$ for any pairs $x,y$, except for $x=(i,j)$ et $y=(i,j+1)$ (with $j<n$). But for this pair, one has $A_{J_{y,+},(x,-)}=A_{[j],(i,j,-)}=0$.
This shows that $\mathcal{F}$ is a fooling set of $A$, so that $\text{rk}_+(A)\geq \text{cov}(A) \geq 2q$, showing that $\text{rk}_+(A)=2q=n(n-1)$, in this particular example.

Regarding the matrix $B$, we may divide it into two horizontal submatrices, with the first $q$ rows forming the matrix $B_+=(B_{(i,j,+), \sigma})_{(i,j)\in \mathcal{P} ; \sigma\in \mathcal{S}_n}$, and the last $q$ rows form the submatrix $B_-:=(B_{(i,j,-), \sigma})_{(i,j)\in \mathcal{P} ; \sigma\in \mathcal{S}_n}$ : note that unlike for $A$, one here has $B_-=\textbf{1} -B_+$, with $\textbf{1}$ the $q\times n!$ matrix with all entries equal to $1$. The $(q\times q)$-submatrix of $B_-$ obtained by keeping the rows corresponding to transpositions $\tau_{i,j}$, is the diagonal matrix, showing that $rk_+(B_-)=rk_(B_-)=q$, and that $rk_+(B_+)\geq rk(B_+) \geq q-1$. However, proving a (conjectural) lower bound $\text{rk}_+(B) \geq 2q-1$ seems already quite harder than for $A$.

In fact, this becomes of interest when asking for the exact value of $\text{xc}(\text{Perm}(n))$: thanks to Goemans (\cite{goemans}), we know that the extension complexity of the permutahedron is of order $n\ln(n)$, and more precisely that $\log_2(n!) \leq \text{xc}(\text{Perm}(n)) \leq 2q \leq C n \log(n)$ (by taking a minimal sorting network, in $q$ comparators, given by AKS), and conjecturally $\text{xc}(\text{Perm}(n)) \in [|q,2q|]$, if $q$ is minimal. The hard part of this conjecture, is to show that  $\text{xc}(\text{Perm}(n))\geq q$. Heuristically, it says that sorting networks give the best lifts of the permutahedron. The question of where between $q$ and $2q$, lies the actual size of the extension yielded by above factorization $S=AB$, comes naturally, by analogy with the case of the regular $n$-gon, for which $\text{xc}(P_n)\in \{2q, 2q-1\}$ (see \cite{szusterman}).

\subsection{About Goemans extension of the permutahedron}
\label{para:comparisongoemans}

Let $\sigma^+=\sigma_{i_{q-1},j_{q-1}}^+\cdots \circ \sigma_{i_0,j_0}^+$ be a sorting network in $q$ comparators.

Given $x=x_{\sigma}=(\sigma(1), ... , \sigma(n))$, we can let $x$ go through $\sigma^+$, and this yields a sequence of $q+1$ vectors in $\R^n$ : $x^{(0)}=x$, $x^{(1)}=\sigma_{i_0,j_0}^+(x)$, etc. , $x^{(k+1)}=\sigma_{i_k,j_k}^+(x^{(k)})$, up to $x^{(q)}=\sigma_{i_{q-1},j_{q-1}}^+(x^{(q-1)})$. In particular, note that, (since $\sigma^+$ is a sorting network) :
\[ x^{(q)}=\sigma_{i_{q-1},j_{q-1}}^+ \circ \cdots \circ \sigma_{i_0,j_0}^+(x_{\sigma})=\sigma^+(x_{\sigma})=x_{id}=(1,2,... ,n).\]
Moreover, denote $\theta_k=\frac{1}{\sqrt{2}}(e_{j_k}-e_{i_k}) \in \mathbb{S}^{n-1}$, and $\pi_k=\pi_{i_k,j_k}$ the orthogonal projection onto $H_{i_k,j_k}:=\theta_k^{\perp}$. Since $x^{(k+1)}=\sigma_{i_k,j_k}^+(x^{(k)})$, we have :
\begin{align*}
\pi_k(x^{(k+1)})=\pi_k(x^{(k)}) ; \quad 
\langle x^{(k+1)}, \theta_k \rangle \geq |\langle x^{(k)}, \theta_k \rangle| \hspace{2mm}\text{ i. e.}\langle x^{(k+1)} \pm x^{(k)}, \theta_k \rangle \geq 0.
\end{align*}
(the two inequalities hold with equality here).

Denote $w_{\sigma}:=(x^{(0)},x^{(1)}, \cdots, x^{(q)}) \in \R^{n(q+1)}$, with $x^{(0)}=x_{\sigma}$ (and the $x^{(k)}$ as above). Then clearly $w_{\sigma} \in Q_n$, where $Q_n$ is the following polytope:
\[Q_n=\{w=(y_0, ... , y_q)\in \R^{n(q+1)} | y_q=x_{id} ; \pi_k(y_{k+1}-y_k)=0 ; \langle y_{k+1} \pm y_k, \theta_k \rangle \geq 0\} \]

Hence if $\pi : \R^{n(q+1)} \to \R^n , (y_0, ... ,y_q)\mapsto y_0$, is the projection onto first $n$ coordinates, we see that $\pi w_{\sigma}=x_{\sigma}$, and hence that $\text{Perm}(n) \subset \pi Q_n$.
In fact, this is an equality, i.e. $Q_n$ is an extension of $\text{Perm}(n)$, as shown by Goemans.
\begin{prop}[Goemans]
$\pi Q_n=\text{Perm}(n)$
\end{prop}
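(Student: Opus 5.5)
We prove the two inclusions separately. The inclusion $\text{Perm}(n)\subset \pi Q_n$ was already obtained above: each vertex $x_\sigma$ is the image of $w_\sigma\in Q_n$, and $\pi Q_n$ is convex. It remains to show $\pi Q_n\subset \text{Perm}(n)$. We use Edmonds' description \eqref{edmondsreppermu}: for $w=(y_0,\ldots,y_q)\in Q_n$ we must prove $\langle y_0,\mathbf{1}\rangle = \frac{n(n+1)}{2}$ and $\langle y_0,\chi_J\rangle\geq \frac{|J|(|J|+1)}{2}$ for every nonempty proper $J$.

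\textbf{Step 1 (the sum is preserved).} The constraints $\pi_k(y_{k+1}-y_k)=0$ say that $y_{k+1}-y_k$ is parallel to $\theta_k$. Since $\langle \theta_k,\mathbf{1}\rangle=0$, we get $\langle y_k,\mathbf{1}\rangle=\langle y_q,\mathbf{1}\rangle=\langle x_{id},\mathbf{1}\rangle=\frac{n(n+1)}{2}$ for every $k$, and in particular for $k=0$.

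\textbf{Step 2 (one-step inequality).} This is the same telescoping idea used in the protocol. Fix $J$, and let $(J_k)$ be its orbit under the reverse sorting network $\sigma^-$, so that $J_{k+1}=\sigma^-_{i_k,j_k}(J_k)$ and $J_q=[|J|]$ by Lemma \ref{forwardisreverseandversa} and Proposition \ref{equivdfnsRSN}. We claim that for every $k$,
\[\langle y_k,\chi_{J_k}\rangle \geq \langle y_{k+1},\chi_{J_{k+1}}\rangle .\]
Write $y_{k+1}=y_k+t\theta_k$. Set $a=\langle y_k,\theta_k\rangle$ and $b=\langle y_{k+1},\theta_k\rangle=a+t$; the constraints give $b\geq |a|$. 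We check the four cases of Claim \ref{neededclaimsorting}:
\begin{itemize}
\item If $\{i_k,j_k\}\cap J_k$ is empty or equal to $\{i_k,j_k\}$, then $J_{k+1}=J_k$ and $\langle\theta_k,\chi_{J_k}\rangle=0$, so the two sides are equal.
\item If $i_k\in J_k$ and $j_k\notin J_k$, then $J_{k+1}=J_k$, and the difference of the two sides is $-t\langle\theta_k,\chi_{J_k}\rangle=t/\sqrt2=(b-a)/\sqrt2\geq 0$.
\item If $j_k\in J_k$ and $i_k\notin J_k$, then $J_{k+1}=J_k\triangle\{i_k,j_k\}$. Here $\langle y_{k+1},\chi_{J_{k+1}}\rangle=\langle y_{k+1},\chi_{J_k}\rangle-\sqrt2\, b$. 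The difference works out to $-t/\sqrt2+\sqrt2 b=(a+b)/\sqrt2\geq 0$.
\end{itemize}
So the inequality $b\geq|a|$ is used exactly in the two ways $b\pm a\geq 0$. This case computation is the only real content, and the main place where sign conventions could go wrong. I would carefully recheck it against $\sigma^-_{i,j}$, which moves the element $j$ to $i$.

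\textbf{Step 3 (conclusion).} Telescoping Step 2 from $k=0$ to $k=q$ gives
\[\langle y_0,\chi_J\rangle\geq\langle y_q,\chi_{[|J|]}\rangle=\langle x_{id},\chi_{[|J|]}\rangle=1+2+\cdots+|J|=\frac{|J|(|J|+1)}{2}.\]
Together with Step 1, this places $y_0$ in Edmonds' description, so $\pi Q_n\subset\text{Perm}(n)$. This is the polyhedral shadow of identity \eqref{slackpermu}.
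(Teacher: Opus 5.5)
Your proof is correct; I rechecked the three cases, including the sign conventions for $\sigma^-_{i,j}$. For the inclusion $\pi Q_n\subset\text{Perm}(n)$, however, you take a different route from the paper.

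The paper never follows a specific set. It shows that $m_k(y)=\min_{|J|=k}y(J)$ is non-increasing along $y_0,\ldots,y_q$, because each step of $Q_n$ replaces the pair $(y_i,y_j)$ by a pair with the same sum and a larger spread. This is a majorization argument carried out on sorted coordinates, and it concludes with $m_k(y_0)\geq m_k(x_{id})=\frac{k(k+1)}{2}$. Since that argument uses only $y_q=x_{id}$, the inclusion holds for an arbitrary comparator sequence; the sorting property is needed only for $\text{Perm}(n)\subset\pi Q_n$.

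You instead transport a fixed $J$ along the reverse network, so you rely on Lemma \ref{forwardisreverseandversa} to get $J_q=[|J|]$. In return, your argument is purely linear, with no reordering of coordinates, and it gives an exact identity rather than just an inequality. Writing $s^{\pm}_l:=\langle y_{l+1}\pm y_l,\theta_l\rangle/\sqrt2\geq 0$, you have shown $\langle y_0,\chi_J\rangle-\frac{|J|(|J|+1)}{2}=\sum_{l}(A_{J,l,1}s^+_l+A_{J,l,-1}s^-_l)$, where $A$ is the matrix of the one-round protocol. At $w=w_\sigma$ one gets $s^{\pm}_l=B_{l,\pm1,\sigma}$, so your Step 2 specializes to Claim \ref{neededclaimsorting}.

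Consequently, $w\mapsto(y_0,(s^{\pm}_l)_l)$ maps $Q_n$ linearly into the extension $Q_{\sigma^+}$ obtained from the factorization $S=AB$. That is exactly the link between Goemans' extension and the protocol which the paper only quotes from the author's thesis.
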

\begin{proof}
One can use Edmonds representation of $\text{Perm}(n)$: so as to check that $y=\pi(w) \in \text{Perm}(n)$, for any $w\in Q_n$, one needs to check that $y([n])=\frac{n(n+1)}{2}$, and that $y(J)\geq \frac{|J|(|J|+1)}{2}$, for $y=y_0$.

Since $e_i+e_j \in H_{i,j}$ and $e_l \in H_{i,j}$ (for $l\neq i,j$), observe that $\pi_k(y_{k+1}-y_k)=0$ implies that $y_k([n])=y_{k+1}([n])$. In other words, if $w\in Q$, then $y_0([n])=...=y_q([n])$ and since $y_q=x_{id}$, one gets that $y_0([n])=\frac{n(n+1)}{2}$.

Similarly, $y_q(J)=x_{id}(J)=\frac{|J|(|J|+1)}{2}$, so it suffices to argue that (when $w=(y_l)_{l=0}^q \in Q_n$), for any $1\leq k \leq n$, the sequence $m_k(y_j):=\min_{|J|=k} y_j(J)$, is non-increasing in $j\leq q$. 

Let $y\in \R^n$ and let $z\in \R^n$ be such that $\pi_{i,j}(y)=\pi_{i,j}(z)$, and $\langle z\pm y, \theta_{i,j} \rangle \geq 0$. This exactly means that $y_l=z_l$ for $l\neq i,j$, that $y_i+y_j=z_i+z_j$, and that $z_j \geq \max \{y_i, y_j\}$. Since $m_k(y)$ depends only on $\overline{y}$, to check that $m_k(y)\geq m_k(z)$, we may assume that $y=\overline{y}=(u_1, ... ,u_n)$ with $u_a\leq u_{a+1}$ for $a=1,2, ... , n-1$.
Thus $z=(u_1, ... ,u_{i-1},u^-,u_{i+1}, ... ,u_{j-1}, u^+, u_{j+1}, ... ,u_n)$, with $u^-+u^+=u_i+u_j$, and $u^- \leq u_i$ (and $u^+\geq u_j$). If $k\leq j-1$, then it is clear\footnote{since the $a^{th}$-smallest coordinate of $z$ will be at most the $a^{th}$ coordinate of $y$, for all $a\leq j-1$} that $m_k(z)\leq m_k(y)$. Say $k=j$ and $u^+>u_j$, then
\[m_k(z)=m_k(y)+u^--u_i+\min\{ u^+, u_{j+1}\} -u_j \leq =m_k(y)+(u^++u^-)-(u_i+u_j)=m_k(y) \]
and similarly, if $j<k\leq n$, one see that, either $u^+>u_{k+1}$ and so
\[m_k(z)=m_k(y)+u^--u_i+u_{k+1}-u_j < m_k(y)+u^--u_i+u^+-u_j=m_k(y)\]
or that $u^+ \leq u_{k+1}$, so that $m_k(z)=m_k(y)$.

This shows that $(m_k(y_j) , 0\leq j\leq q)$ is non-increasing, and thus that $\min_{|J|=k} y_0(J)=m_k(y_0) \geq m_k(y_q)=\frac{k(k+1)}{2}$, i.e. as needed.
\end{proof}

One may argue that there is a lot of redundancy in the above $Q_n$. Indeed, if $w=(y_k)_{k=0}^q \in Q_n$, then $(y_k)_l=(y_{k+1})_l$ for all $l\notin \{i_k, j_k\}$. Hence instead of giving all of $y_{k+1}$, which seems repetitive, one may prefer to only give $(a_{k+1},b_{k+1}):=((y_{k+1})_{i_k}, (y_{k+1})_{j_k})$. This defines a projection of the above $Q_n$: $\tilde{\pi}(Q_n)=\tilde{Q}_n$, where we keep $y_0$, and then we only record the $i_{k-1}$ and $j_{k-1}$ coordinates of $y_k$ (i.e. $\tilde{\pi}(y_0, ... , y_q)=(y_0,a_1,b_1, ... ,a_q,b_q)\in \R^{n+2q}$).
In other words, set
\[
\tilde{Q}_n := \left\{
    \tilde{w} \in \mathbb{R}^{n+2q} :
    z_q(\tilde{w})=x_{id}, \quad
    \begin{aligned}
        & a_j+b_j=\alpha_j(\tilde{w})+\beta_j(\tilde{w}) \\
        & b_j \geq \max\{ \alpha_j(\tilde{w}),\beta_j(\tilde{w})\}
    \end{aligned}
    \quad (\forall 1\leq j\leq q)
\right\}
\]
where the linear forms $\alpha_j,\beta_j$ and the linear function $z_q$ are defined as follows: first define recursively $z_1(\tilde{w}), ... , z_q(\tilde{w}) \in \R^n$, i.e. $z_1(\tilde{w})$ is the $z_1\in \R^n$ s.t. $(z_1)_l=(y_0)_l$ for $l\neq i_0, j_0$, and such that $(z_1)_{i_0}=a_1$ and $(z_1)_{j_0}=b_1$ ; then $z_2(\tilde{w})$ is the $z_2\in \R^n$ s.t. $(z_2)_l=(z_1)_l$ for $l\neq i_1, j_1$, and such that $(z_2)_{i_1}=a_2$ and $(z_2)_{j_1}=b_2$, etc. (it is easily \footnote{in fact each $z_m$, $1\leq m\leq q$, is the projection which keeps exactly $n$ of the $n+2q$ coordinnates (more precisely, of the first $n+2m$ coordinates) of $\tilde{w}$, and permute them in a certain order. Which coordinates, and what order, is determined by the first $m$ pairs of $\sigma^+$ : $(i_0,j_0), ... , (i_{m-1},j_{m-1})$.} checked that each of the maps $z_j : \R^{n+2q}\to \R^n$, is linear). Also define $z_0(\tilde{w})=y_0$. Then $\alpha_m(\tilde{w})$ is the $i_{m-1}^{th}$ coordinate of $z_{m-1}(\tilde{w})$, and $\beta_m(\tilde{w})$ is the $j_{m-1}^{th}$ coordinate of $z_{m-1}(\tilde{w})$. 

\begin{claim}
$Q_n$ and $\tilde{Q}_n$ are affinely equivalent : there exist affine maps $\tilde{\pi} : \R^{n(q+1)} \to \R^{n+2q}$, and $\overline{\pi}:\R^{n+2q}\to \R^{n(q+1)}$, such that 
$\overline{\pi} \tilde{Q}_n=Q_n $, and
$\tilde{\pi} Q_n=\tilde{Q}_n$.
\end{claim}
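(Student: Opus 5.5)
We take $\tilde{\pi}$ to be the linear map already described: it keeps $y_0$ and records $(a_k,b_k)=((y_k)_{i_{k-1}},(y_k)_{j_{k-1}})$ for $1\leq k\leq q$. For $\overline{\pi}$ we take $\tilde{w}\mapsto (z_0(\tilde{w}),z_1(\tilde{w}),\dots,z_q(\tilde{w}))$, built from the linear maps $z_m$ defined above. Both maps are linear, hence affine. The proof then consists of checking the two inclusions in each direction, plus a compatibility identity between the two maps.

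\emph{Step 1 (the key identity).} For $w=(y_0,\dots,y_q)\in Q_n$ we show by induction on $m$ that $z_m(\tilde{\pi}(w))=y_m$. The case $m=0$ holds by definition. For the inductive step, the constraint $\pi_{m}(y_{m+1}-y_m)=0$ means $y_{m+1}-y_m\in \R\theta_m$. So $y_{m+1}$ agrees with $y_m$ outside the coordinates $\{i_m,j_m\}$, and $(y_{m+1})_{i_m}+(y_{m+1})_{j_m}=(y_m)_{i_m}+(y_m)_{j_m}$. On the other side, $z_{m+1}(\tilde{\pi}w)$ agrees with $z_m(\tilde{\pi}w)=y_m$ outside $\{i_m,j_m\}$, and takes the values $a_{m+1}=(y_{m+1})_{i_m}$ and $b_{m+1}=(y_{m+1})_{j_m}$ there. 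Hence $z_{m+1}(\tilde{\pi}w)=y_{m+1}$. This identity gives $\overline{\pi}\circ\tilde{\pi}=\mathrm{id}$ on $Q_n$, which immediately yields $Q_n\subset\overline{\pi}\tilde{\pi}Q_n$.

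\emph{Step 2 ($\tilde{\pi}Q_n\subset\tilde{Q}_n$).} Using Step 1, $\alpha_{m}(\tilde{\pi}w)=(y_{m-1})_{i_{m-1}}$ and $\beta_m(\tilde{\pi}w)=(y_{m-1})_{j_{m-1}}$. The sum constraint is the coordinate-sum identity from Step 1. The constraint $b_m\geq\max\{\alpha_m,\beta_m\}$ is the translation of $\langle y_m\pm y_{m-1},\theta_{m-1}\rangle\geq 0$. To see this, write $y_m-y_{m-1}=t\theta$. Then $\langle y_m,\theta\rangle=(b_m-a_m)/\sqrt2$ and $\langle y_{m-1},\theta\rangle=(\beta_m-\alpha_m)/\sqrt2$. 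The two inequalities become $b_m-a_m\geq|\beta_m-\alpha_m|$, which, combined with $a_m+b_m=\alpha_m+\beta_m$, is equivalent to $2b_m\geq \alpha_m+\beta_m+|\beta_m-\alpha_m|=2\max\{\alpha_m,\beta_m\}$. Finally, $z_q(\tilde{\pi}w)=y_q=x_{id}$.

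\emph{Step 3 ($\overline{\pi}\tilde{Q}_n\subset Q_n$, and surjectivity).} Given $\tilde{w}\in\tilde{Q}_n$, set $y_m=z_m(\tilde{w})$. By construction $y_{m+1}-y_m$ is supported on $\{i_m,j_m\}$, and the sum constraint makes it a multiple of $\theta_m$, so $\pi_m(y_{m+1}-y_m)=0$. The same equivalence as in Step 2, read backwards, gives the two inequalities, and $y_q=x_{id}$. Moreover, $\tilde{\pi}(\overline{\pi}\tilde{w})=\tilde{w}$, since $\tilde{\pi}$ reads off $y_0$ and the $(i_{m-1},j_{m-1})$ coordinates of $z_m(\tilde{w})$, which are $a_m,b_m$ by definition. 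Together with Step 2 this gives $\tilde{\pi}Q_n=\tilde{Q}_n$, and together with Step 1 it gives $\overline{\pi}\tilde{Q}_n=Q_n$.

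The only delicate point is the equivalence between the pair of inequalities $\langle y_{k+1}\pm y_k,\theta_k\rangle\geq0$ and $b\geq\max\{\alpha,\beta\}$. This equivalence holds only in the presence of the sum equality, so in Steps 2 and 3 the sum equality must be established first. Everything else is bookkeeping with the recursive definition of the maps $z_m$.
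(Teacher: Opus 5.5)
Your proof is correct and follows essentially the same route as the paper. You use the same maps $\tilde{\pi}$ and $\overline{\pi}$, prove the inclusions $\tilde{\pi}Q_n\subset\tilde{Q}_n$ and $\overline{\pi}\tilde{Q}_n\subset Q_n$, and upgrade them to equalities via $\tilde{\pi}\circ\overline{\pi}=\mathrm{id}$ and $\overline{\pi}\circ\tilde{\pi}=\mathrm{id}$ on $Q_n$, while also supplying the details the paper leaves implicit: the induction giving $z_m(\tilde{\pi}w)=y_m$, and the fact that the pair of inequalities becomes $b\geq\max\{\alpha,\beta\}$ only once the sum equality is in place.
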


\begin{proof}[Proof of Claim] we first argue that $\tilde{Q}_n=\tilde{\pi} Q_n$, with $\tilde{\pi}$ the projection as before. In fact, if $w=(y_0, ... ,y_q)\in Q_n$, and if $\tilde{w}:=\tilde{\pi} w$, then the fact that $\pi_k(y_{k+1})=\pi_k(y_k)$ gives us that $a_{k+1}+b_{k+1}=\alpha_{k+1}(\tilde{w})+\beta_{k+1}(\tilde{w})$, and the fact that $\langle y_{k+1} \pm y_k,\theta_k \rangle \geq 0$, tells us exactly that $b_{k+1}\geq \max \{\alpha_{k+1}(\tilde{w}), \beta_{k+1}(\tilde{w})\}$. Moreover by construction, $z_q(\tilde{w})=z_q(\tilde{\pi} w)=y_q$, and so $z_q(\tilde{w})=x_{id}$.
So $\tilde{\pi} Q_n \subset \tilde{Q}_n$. Conversely, define 
$\overline{\pi}(\tilde{w}):=(y_0, z_1(\tilde{w}), \cdots ,z_q(\tilde{w})).$ Then $\overline{\pi}(\tilde{w})\in \R^{n(q+1)}$, and similarly as before, the equations and inequalities satisfied by $\tilde{w}$, translate exactly into the affine constraints defining $Q_n$, so that $\overline{\pi}(\tilde{Q}_n) \subset Q_n$.
Therefore $\tilde{Q}_n \subset \tilde{\pi} Q_n$, because $\tilde{\pi} \circ \overline{\pi} (\tilde{w})=\tilde{w}$.
And similarly, $\overline{\pi}(Q_n)=Q_n$, because if $w\in Q_n$, then $w=\overline{\pi} \circ \tilde{\pi}(w)$.
\end{proof}

\begin{rem}
$\tilde{Q}_n$ is the extension defined in Goemans' article \cite{goemans}, but since $Q_n$ is simply another representation of the same polyhedron, we may also call it \emph{Goemans extension}.
\end{rem}

A natural question, given the protocol on page 12, \ref{para:newprotpermu} (defined after a given sorting network $\sigma^+$ of $\R^n$), is whether $Q_n$ and the extension resulting from the protocol are related, and how.

Say $P=\{x\in \R^n : Dx\geq b, Cx=\lambda \}$ is a polytope, and $S$ its slack matrix. Assume $S=AB$ is a non-negative factorization of $S$, of size $r \geq 1$. We recall (see for instance \cite{yannak91}) that the extension of $P$ yielded by this factorization is :
$$Q:=\{(x,y)\in \R^{n+r} : Dx-Ay=b, Cx=\lambda, y\geq 0\}.$$

Denote $\mathcal{J}=\mathcal{P}([n])\setminus \{\emptyset, [n]\}$, $b_J=\frac{|J|(|J|+1)}{2}$, $D_J=\chi_J^{\top}$, and $\lambda=\frac{n(n+1)}{2}$ and $C=\chi_{[n]}^{\top}$ (so that $P=\{x : Dx\geq b, Cx=\lambda\}=\text{Perm}(n)$).
A closer look at the extension $Q_{\sigma^+}=\{(x,z)\in \R^n \times \R_{\geq 0}^{2q} : Dx-Ay=b , Cx=\lambda\}$, with $(D,b,C,\lambda)$ as before, and with $A$ the matrix from the factorization $S=AB$ given on page 10, i.e. $A_{J, l,+}=\mathbf{1}_{j_l\in J_l, i_l\notin J_l}$ and $A_{J, l,-}=\mathbf{1}_{i_l\in J_l, j_l\notin J_l}$, with $l\in [|0,q-1|]$, reveals that $Q_n$ (or $\tilde{Q}_n$) can be seen, up to affine equivalence, as a section of $Q_{\sigma}^+$. This was proven in (Proposition 15, p. 43, \cite{szusterman}).

In this sense, Goemans extension is a better one : $\text{dim}(Q_n)\leq \text{dim}(Q_{\sigma^+})$ and $|Q_n| \leq |Q_{\sigma^+}|$. It is open whether any of these two inequalities is strict or not.
A more interesting question is to investigate whether $|Q_n| \geq q$, or at least, whether $|Q_{\sigma}^+| \geq q$, when $\sigma^+$ is a minimal sorting network in $q$ comparators (see paragraph \ref{para:minimalSN} for a definition of minimality, and for an example of quadratic SN, for which this inequality holds).

\end{document}